\def\BibTeX{{\rm B\kern-.05em{\sc i\kern-.025em b}\kern-.08em
		T\kern-.1667em\lower.7ex\hbox{E}\kern-.125emX}}
\newtheorem{assumption}{$\mathbf{Assumption}$}
\newtheorem{lemma}{$\mathbf{Lemma}$}
\newtheorem{theorem}{$\mathbf{Theorem}$}
\newtheorem{definition}{$\mathbf{Definition}$}
\newtheorem{corollary}{$\mathbf{Corollary}$}
\DeclareMathOperator*{\argmin}{arg\,min}
\begin{document}
\title{Enhancing Robustness of Control Barrier Function: A Reciprocal Resistance-based Approach}

\author{Xinming Wang, \IEEEmembership{Member, IEEE}, Zongyi Guo, \IEEEmembership{Member, IEEE}, Jianguo Guo, Jun Yang, \IEEEmembership{Fellow, IEEE}, and Yunda Yan, \IEEEmembership{Senior Member, IEEE}
\thanks{Xinming Wang, Jianguo Guo and Zongyi Guo are with the Institute of Precision Guidance, and Control, Northwestern Polytechnical University, Xi’an 710060, China. (email: wangxm0615@gmail.com, guojianguo@nwpu.edu.cn, guozongyi@nwpu.edu.cn).}
\thanks{Jun Yang is with Department of Aeronautical and Automotive Engineering, Loughborough University, Loughborough LE11 3TU, UK (e-mail: j.yang3@lboro.ac.uk).}
\thanks{Yunda Yan is with Department of Computer Science, University College London, London, WC1E 6BT, UK (e-mail: yunda.yan@ucl.ac.uk).}
\thanks{This work was supported in part by the National Natural Science Foundation of China under Grants 52472419, Grant 92271109, and Grant 52272404.}}	

\maketitle

\begin{abstract}
In this note, a new reciprocal resistance-based control barrier function (RRCBF) is developed to enhance the robustness of control barrier functions for disturbed affine nonlinear systems, without requiring explicit knowledge of disturbance bounds. By integrating a reciprocal resistance-like term into the conventional zeroing barrier function framework, we formally establish the concept of the reciprocal resistance-based barrier function (RRBF), rigorously proving the forward invariance of its associated safe set and its robustness against bounded disturbances. The RRBF inherently generates a buffer zone near the boundary of the safe set, effectively dominating the influence of uncertainties and external disturbances. This foundational concept is extended to formulate RRCBFs, including their high-order variants. To alleviate conservatism in the presence of complex, time-varying disturbances, we further introduce a disturbance observer-based RRCBF (DO-RRCBF), which exploits disturbance estimates to enhance safety guarantees and recover nominal control performance. The effectiveness of the proposed framework is validated through two simulation studies: a second-order linear system illustrating forward invariance in the phase plane, and an adaptive cruise control scenario demonstrating robustness in systems with high relative degree.
\end{abstract}

\begin{IEEEkeywords}
Control barrier function, Disturbance rejection, Safety-critical control, Reciprocal resistance.
\end{IEEEkeywords}

\section{Introduction}
In recent years, advancements in science and technology have significantly shaped the development of control systems, shifting the focus beyond traditional goals, such as stability and control accuracy, toward explicitly enforcing safety requirements. Safety has become a critical concern in practical applications, particularly in autonomous systems and intelligent robotics, where failure to meet safety standards can lead to catastrophic consequences \cite{ozguner2007systems,zanchettin2015safety}. To address this, various control strategies have been proposed to prevent unsafe or undesirable behavior, drawing from diverse perspectives such as reachability analysis, model predictive control, barrier Lyapunov functions, prescribed performance control, funnel control, and control barrier functions (CBFs) \cite{althoff2021set,rawlings2017model,tee2009barrier,bechlioulis2010prescribed,berger2021funnel,Ames2017,guo2022control}. These approaches provide different strengths in enforcing safety constraints while maintaining performance.

Among them, CBFs have emerged as a powerful and systematic tool for safety enforcement in affine nonlinear systems of the form:
\begin{equation}
	\begin{aligned}
		&\dot{\boldsymbol{x}} = \boldsymbol{f}(\boldsymbol{x}) + \boldsymbol{g}(\boldsymbol{x})\boldsymbol{u},
	\end{aligned}
	\label{sys0}
\end{equation}
where $\boldsymbol{x}\in \mathbb{X} \subseteq \mathbb{R}^n$, $\boldsymbol{u}\in \mathbb{R}^m$, $\mathbb{X}$ is denoted as the admissible compact set for states. The nonlinear functions $\boldsymbol{f}: \mathbb{R}^n\rightarrow\mathbb{R}^n$, $\boldsymbol{g}: \mathbb{R}^n\rightarrow\mathbb{R}^{n\times m}$ in (\ref{sys0}) are known locally Lipschitz functions. In this framework, CBFs impose a control Lyapunov function-like condition \cite{ames2014rapidly} to ensure the forward invariance of the 0-superlevel set of a continuously differentiable function $h:\mathbb{R}^n\rightarrow \mathbb{R}$, defined as $\mathbb{C} := \{ \boldsymbol{x}\in \mathbb{R}^n:h(\boldsymbol{x})\geq0\}$.

Despite their advantages in safety guarantee, conventional CBFs, including both zeroing CBF (ZCBF) and reciprocal CBF (RCBF), are often sensitive to model uncertainties and external disturbances, as modeled by
\begin{equation}
	\begin{aligned}
		&\dot{\boldsymbol{x}} = \boldsymbol{f}(\boldsymbol{x}) + \boldsymbol{g}(\boldsymbol{x})(\boldsymbol{u} + \boldsymbol{d}).
	\end{aligned}
	\label{sys}
\end{equation}
where $ \boldsymbol{d}(t,\boldsymbol{x})$, where $\boldsymbol{d}: \mathbb{R}_+\times\mathbb{R}_n \rightarrow \mathbb{D} \subseteq \mathbb{R}^m$ and $\mathbb{D}$ is its admissible compact set. Ignoring these disturbances and uncertainties can significantly degrade the desired control performance and, in some cases, even violate formal safety guarantees \cite{xu2015robustness}. 

To mitigate this, robust and adaptive extensions of CBFs have been developed. These include worst-case robust formulations \cite{jankovic2018robust}, input-to-state safety based CBFs \cite{alan2021safe}, adaptive CBFs for parametric uncertainties \cite{taylor2020adaptive}, and disturbance observer-based compensation methods \cite{DasarXiv,sun2024safety,chen2023robust,takano2020robust,WangarXiv,wang2024tac}. While effective under certain conditions, these approaches often require prior knowledge of the disturbance characteristics, such as upper bounds, Lipschitz constants, or time-derivative information, to mitigate the impact of disturbances or estimation errors. In practice, such information may be difficult to obtain or estimate accurately, which complicates controller design and parameter tuning and limits scalability in practical scenarios. These limits motivate the following fundamental question:

\textit{Is it possible to design a control barrier function that can autonomously adapt to disturbances without requiring specific prior knowledge or modelling of them?}

To address the challenge of enforcing safety under unknown disturbances, we propose a reciprocal resistance-based control barrier function (RRCBF) that enhances robustness without requiring explicit disturbance information. By integrating a reciprocal resistance term, derived from the safe function $h(\boldsymbol{x})$, into the classical ZCBF framework, the proposed method effectively dominates the influence of disturbances near the boundary of the safety set while preserving the conventional forward invariance property in the interior. Specifically, we begin by introducing the foundational reciprocal resistance-based barrier function (RRBF), establishing its forward invariance under bounded disturbances, and comparing it with conventional ZBF and RBF formulations. The framework is then extended to formulate the RRCBF and its higher-order variants. To further enhance performance under complex disturbances, the approach is improved using disturbance estimation and compensation techniques. To show the effectiveness, two simulation examples are considered: a disturbed second-order linear system, used to illustrate and compare the forward invariant sets of ZBF, RBF, and the proposed method; and an adaptive cruise control problem, which highlights the safety performance of the proposed approach in a system with high control relative degree.


Notations: The sets of real numbers, positive real numbers and nonnegative integers are denoted as $\mathbb{R}$, $\mathbb{R}_+$ and $\mathbb{N}$. For $i,j\in\mathbb{N}$ satisfying $j\leq k$, define $\mathbb{N}_{j:k}\triangleq\{ j,j+1,\cdots,k \}$ as a subset of $\mathbb{N}$. Denote $\Vert \boldsymbol{x} \Vert$ as the 2-norm of vector $\boldsymbol{x}$. A continuous function $\alpha:[0, \infty) \rightarrow [0,\infty)$ is said to belong to class $\mathcal{K}$ function, if it is strictly increasing and $\alpha(0)=0$. A continuous function $\alpha_e:(-\infty, \infty) \rightarrow (-\infty,\infty)$ is said to belong to extended class $\mathcal{K}$ function, if it is strictly increasing and $\alpha_e(0)=0$.

\section{Reciprocal resistance-based barrier function}
In this section, the reciprocal resistance-based barrier function is systematically introduced, covering its motivation, definition, robustness to bounded disturbances and comparison with ZBF and RBF.
\subsection{Motivation}
To achieve robustness of external disturbances without requiring additional prior information, our design is inspired by utilizing $h(\boldsymbol{x})$ to construct a reciprocal-like function to increase the resistance in the presence of disturbances/uncertainties. Motivated by this idea, consider the following nonlinear system
\begin{equation}\label{ode}
	\dot{z} = -\alpha z + \frac{\beta}{z},
\end{equation}
where $z\in\mathbb{R}_+$ is the state, $\alpha, \beta\in\mathbb{R}_+$ are parameters to be selected. It is clear that (\ref{ode}) is locally Lipschitz continuous with $z$ for $z\in\mathbb{R}_+$. The following analysis investigates the system’s stability and positivity properties in the presence and absence of bounded disturbances. 

We first analyze the system \eqref{ode} in the absence of disturbances. Consider the candidate Lyapunov function $V(z) = 0.5z^2$, and its derivative along (\ref{ode}) is 
\begin{equation}\label{dV_ode}
	\begin{aligned}
		\dot{V}(z) = -2\alpha V + \beta.
	\end{aligned}
\end{equation}
From (\ref{dV_ode}), it is obtained that system (\ref{ode}) is practically exponentially stable. Moreover, by solving the differential equation (\ref{ode}) and setting the initial condition as $z(t_0)\in\mathbb{R}_+$, its solution is expressed as follows
\begin{equation}\label{z_ode}
	z(t) = \sqrt{\bigg(z^2(t_0) - \frac{\beta}{\alpha}\bigg)e^{-2\alpha t} + \frac{\beta}{\alpha}}.
\end{equation} 
Then, it is clear that $z(t)$ remains strictly positive and monotonically converges to $\sqrt{\beta/\alpha}$ as $t$ tends to infinity. 

Next, consider the disturbed case, where the system is subject to a bounded signal composed of both disturbances and uncertainties. Specifically, we examine the following dynamics
\begin{equation}\label{ode_w}
	\dot{z} = -\alpha z + \frac{\beta}{z} + w(t,z),
\end{equation}
where the disturbance term $w(t,z)$ is continuous in $t$, and locally Lipschitz continuous in $z$ and satisfying $\vert w(t,z)\vert \leq \bar{w}$ with $\bar{w}\in\mathbb{R}_+$. The following results can be obtained.
\begin{lemma}
    Considering the system (\ref{ode_w}), if the initial conditions $z(t_0) > 0$, then $z(t) > 0,\;\forall t\geq t_0$.
\end{lemma}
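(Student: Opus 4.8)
The plan is to exploit the blow-up of the reciprocal term $\beta/z$ as $z\to 0^+$ and show that the origin acts as a repelling barrier, so a trajectory initialized in $\mathbb{R}_+$ can never reach it. Since the right-hand side of (\ref{ode_w}) is continuous in $t$ and locally Lipschitz in $z$ on $\mathbb{R}_+$, a unique solution exists locally wherever $z>0$; the whole argument thus lives on the maximal interval of existence on which $z>0$, and I would only afterwards note that the $-\alpha z$ term rules out finite-time escape to $+\infty$, giving forward completeness.

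The core step is to construct an explicit positive barrier level from the worst-case dynamics. Using only the bound $w(t,z)\geq -\bar w$, I would lower-bound (\ref{ode_w}) by
\begin{equation*}
	\dot z \;\geq\; -\alpha z + \frac{\beta}{z} - \bar w \;=\; \frac{\beta-\alpha z^2-\bar w z}{z}.
\end{equation*}
The numerator equals $\beta>0$ at $z=0$ and possesses a single positive root $z^\ast=\big(-\bar w+\sqrt{\bar w^2+4\alpha\beta}\big)/(2\alpha)$, so it is strictly positive on $(0,z^\ast)$. Consequently $\dot z>0$ for every $z\in(0,z^\ast)$: near the boundary the reciprocal resistance term dominates both the decay term $-\alpha z$ and the disturbance, producing a strictly inward flow.

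With this barrier I would finish by contradiction. Suppose $z(T)=0$ at some first time $T>t_0$, so $z(t)>0$ on $[t_0,T)$. By continuity and $z(T)=0<z^\ast$, there is $\delta>0$ with $z(t)\in(0,z^\ast)$ for all $t\in(T-\delta,T)$; on this interval the previous step gives $\dot z>0$, so $z$ is strictly increasing. Letting $t\to T^-$ then yields $z(T)\geq z(T-\delta/2)>0$, contradicting $z(T)=0$. Hence no such $T$ exists and $z(t)>0$ for all $t\geq t_0$.

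The main obstacle, and the reason the claim is not a direct corollary of a standard invariance result, is exactly the singularity of the vector field at $z=0$: Nagumo-type subtangentiality arguments presuppose a vector field that is (at least) continuous up to the boundary, which fails here. The proof instead converts this singularity into an asset, using that $\beta/z\to+\infty$ while $\alpha z+\bar w$ stays bounded to secure a strictly positive inward velocity on a full one-sided neighborhood of the boundary --- and it is essential that this be checked against the worst-case value $-\bar w$ rather than the sign-indefinite actual disturbance $w$.
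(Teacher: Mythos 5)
Your proposal is correct, and its computational core is the same as the paper's: you lower-bound the dynamics by the worst case $w=-\bar w$, obtain $\dot z\geq\big(\beta-\alpha z^2-\bar w z\big)/z$, and identify the same barrier level $z^\ast=\big(\sqrt{\bar w^2+4\alpha\beta}-\bar w\big)/(2\alpha)$, which is exactly the root $z_1$ in (\ref{z1}). Where you genuinely diverge is in how positivity is concluded from the inward flow on $(0,z^\ast)$. The paper symmetrically treats $w=+\bar w$ (yielding $z_2$), proves an ISS estimate via $V=0.5z^2$ and Young's inequality, and then invokes Nagumo's theorem to declare $(0,+\infty)$ forward invariant, supplementing this with the remark that ``by continuity, the solution cannot reach zero in finite time.'' You instead replace the Nagumo step with a first-hitting-time contradiction using strict monotonicity of $z$ on $(0,z^\ast)$, and you explicitly justify why: Nagumo-type subtangentiality results apply to closed sets with a vector field continuous up to the boundary, whereas here the field is singular at $z=0$ and the candidate set is open. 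This is a substantive and fair observation --- the paper's appeal to Nagumo is non-standard in this setting, and continuity alone does not preclude reaching zero in finite time, so your direct argument is actually the more rigorous closure of the proof. Conversely, the paper's extra machinery (the $z_2$ analysis and the ISS bound) is not needed for the lemma but buys quantitative side information: an ultimate bound on $z$, the location of the disturbed equilibria, and the phase portrait of Fig.~\ref{fig_dotz} used later in the exposition; your observation that $-\alpha z$ prevents escape to $+\infty$ suffices for forward completeness. One small refinement you should make: on the maximal interval in the open domain $\mathbb{R}_+$ the solution never literally attains $z(T)=0$, so rather than positing a first zero, run your monotonicity argument as a first-crossing estimate --- if $z(s)<\mu:=\min\{z(t_0),z^\ast\}$, let $\tau=\sup\{t<s:z(t)\geq\mu\}$, note $\dot z>0$ on $(\tau,s]$, and conclude $z(s)>z(\tau)=\mu$, a contradiction --- which yields the uniform bound $z(t)\geq\mu>0$ and simultaneously rules out an oscillatory approach to the boundary as $t\to\tau_{max}^-$.
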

\begin{proof}
    Based on the existence and uniqueness theorem \cite{khalil2002nonlinear}, for any given initial state $z(t_0)\in\mathbb{R}_+$, there exists a maximum time interval $\mathcal{I}_z(z(t_0)):=[t_0,\tau_{max})$ with $\tau_{max}\in\mathbb{R}_+$ such that $z(t)$ is the unique solution on $\mathcal{I}_z(z(t_0))$. Define $V(z) = 0.5z^2$, and its derivative along (\ref{ode_w}) is 
    \begin{equation}
        \begin{aligned}
            \dot{V}(z) &= -\alpha z^2 + zw + \beta \\
            &\leq -(\alpha - \frac{1}{2\epsilon}) z^2 + \frac{\epsilon}{2}\bar{w}^2 + \beta\\
            &= -2\alpha^{'}V + \delta,
        \end{aligned}
    \end{equation}
    where $\alpha^{'} = (\alpha - \frac{1}{2\epsilon})$, $\delta = \frac{\epsilon}{2}\bar{w}^2 + \beta$ and $\epsilon>0$ is the parameter introduced via Young’s inequality. It is clear that the system (\ref{ode_w}) is input-to-state stable (ISS) \cite{sontag1995characterizations} with respect to $w(t,z)$ on $\mathcal{I}_z(z(t_0))$. 

    \begin{figure}[t]
	\centering
	\includegraphics[width=0.4\textwidth]{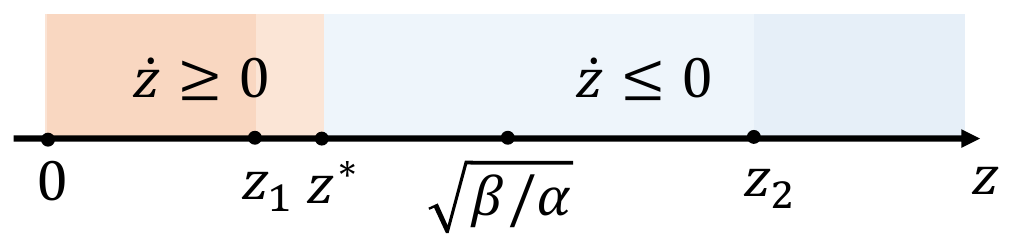}
	\vspace{-0pt}
	\caption{Illustration of the sign of $\dot{z}$ under bounded disturbance. For $(0,z^*]$, $\dot{z}\geq 0$ and $\dot{z}\leq0$ for $[z^*,+\infty)$. Notably, for any $w$ satisfying $\vert w\vert \leq \bar{w}$, it is necessarily guaranteed that $\dot{z}\geq0$ for $(0,z_1]$ and $\dot{z}\leq0$ for $[z_2,+\infty)$.}
	\label{fig_dotz}
    \end{figure}
    
    To establish positivity of $z(t)$, inspired by the analysis in \cite{guo2017simple}, we first consider the worst case, where the disturbance tends to drive $z$ toward zero, i.e., $w(t,z)=-\bar{w}$ 
    \begin{equation}
		\begin{aligned}
		  \dot{z} &\geq -\alpha z + \frac{\beta}{z} - \bar{w} = \frac{-\alpha z^2 - \bar{w}z + \beta}{z}.
		\end{aligned}
    \end{equation}
    Define $s(m) = -\alpha m^2 - \bar{w}m + \beta,\; m\in\mathbb{R}$. The sign of $\dot{z}$ is determined by the quadratic $s(z)$, and its unique positive root of $s(z_1)=0$ is
    \begin{equation}\label{z1}
        z_1 = \frac{\sqrt{\bar{w}^2 + 4\alpha\beta} - \bar{w}}{2\alpha}.
    \end{equation}
    Then, for $z\in(0,z_1]$, we have $\dot{z}\geq 0$, indicating that $z(t)$ is strictly non-decreasing in this region, where the term $\beta/z$ dominates the dynamics. 
    
    Next, consider the opposite scenario where the disturbance maximally assists the increase of $z$, i.e., $w(t,z)=\bar{w}$
    \begin{equation}
		\begin{aligned}
		  \dot{z} &\leq  \frac{-\alpha z^2 + \bar{w}z + \beta}{z}.
		\end{aligned}
    \end{equation}
    The corresponding positive root $z_2 $ is
    \begin{equation}\label{z2}
        z_2 = \frac{\sqrt{\bar{w}^2 + 4\alpha\beta} + \bar{w}}{2\alpha}.
    \end{equation}
    Thus, for $z\in[z_2,\infty)$, $\dot{z} \leq 0$, so $z(t)$ is strictly non-increasing in this region, where the term $-\alpha z$ is dominant. 
    
    As for any $z\in(z_1,z_2)$, the sign of $\dot{z}$ depends on the actual value of $w$. Let $z^*\in\mathbb{R}_+$ denote the equilibrium of the disturbed dynamics. Based on the above analysis, we have $\dot{z}\geq 0$ for $z\in(0,z^*]$ and $\dot{z}\leq 0$ for $z\in[z^*,\infty)$. In particular, $z^*=z_1$ if $w(t,z)=-\bar{w}$, $z^*=z_2$ if $w(t,z)=\bar{w}$ and $z^*=\sqrt{\beta/\alpha}$ if $w(t,z)=0$. The qualitative behavior of $\dot{z}$ under these cases is depicted in Fig. \ref{fig_dotz}.
    
    Therefore, for any $w$ satisfying $\vert w\vert \leq \bar{w}$, it necessarily holds that $\dot{z}\geq 0$ for $z\in(0,z_1]$, and the vector field points inward in the region. According to Nagumo's Theorem \cite{blanchini1999set,brezis1970characterization}, the set $(0,+\infty)$ is forward invariant, i.e., $z(t)\in\mathbb{R}_+$ on $\mathcal{I}_z(z(t_0))$.
    
    Now we show that $\tau_{max}=\infty$. Following the above analysis, $z(t)$ is continuous and strictly positive on $\mathcal{I}_z(z(t_0))$. Therefore, by continuity, the solution cannot reach zero in finite time. Furthermore, since the vector field in (\ref{ode_w}) is locally Lipschitz on $\mathbb{R}_+$, and the system (\ref{ode_w}) is ISS with respect to $w(t,z)$, it is concluded that each solution $z(t)$ starting from $z(t_0)\in\mathbb{R}_+$ is well defined and $z(t)>0$ over $t\in[t_0,\infty)$. This completes the proof.		  
\end{proof}

From (\ref{dV_ode}), (\ref{z_ode}) and Lemma 1, the following conclusions can be drawn
\begin{itemize}
	\item The system (\ref{ode}) is practically exponentially stable, and $z(t)$ exponentially converges to $\lim\limits_{t\rightarrow+\infty}z(t)=\sqrt{\beta/\alpha}$. 
        \item The system (\ref{ode_w}) is ISS with respect to bounded signal $w(t,z)$. Moreover, if $z(t_0) > 0$, $z(t)$ remains strictly positive for all $t\geq t_0$ even in the presence of disturbances.
\end{itemize}
Motivated by these insights, we extend the use of the reciprocal of the safe function $h(\boldsymbol{x})$ within the framework of CBF to improve robustness against disturbances.
\subsection{Reciprocal resistance-based barrier function}
Consider the nonlinear system (\ref{sys0}) without $\boldsymbol{u}$
\begin{equation}
	\begin{aligned}
		&\dot{\boldsymbol{x}} = \boldsymbol{f}(\boldsymbol{x}),
	\end{aligned}
	\label{sys_0}
\end{equation}
Denote $\mathcal{I}_x(\boldsymbol{x}(t_0)):=[t_0,\tau_{max})$ as the maximum time interval starting from $\boldsymbol{x}(t_0)$, then $\boldsymbol{x}(t)$ is the unique solution on $\mathcal{I}(\boldsymbol{x}(t_0))$. Inspired by the positivity of $z(t)$, the following reciprocal resistance-based barrier function is defined.
\begin{definition}
	Considering the system (\ref{sys_0}), a continuously differentiable function $h: \mathbb{R}^n\rightarrow\mathbb{R}$ is a reciprocal resistance-based barrier function if there exist extended class $\mathcal{K}$ functions $\alpha$ and $\beta$ subject to
	\begin{equation}
		\begin{aligned}
			\mathcal{L}_{\boldsymbol{f}}h(\boldsymbol{x}) + \alpha(h(\boldsymbol{x})) - \beta(1/h(\boldsymbol{x})) \geq 0,
		\end{aligned}
		\label{rrbf}
	\end{equation}
    for all $\boldsymbol{x}\in{\rm Int}\mathbb{C}$, where ${\rm Int}\mathbb{C}:=\{ \boldsymbol{x}\in \mathbb{X}:h(\boldsymbol{x}) > 0\}$ is the non-empty interior of the set $\mathbb{C}$. 
\end{definition}

Before illustrating the relationship of RRBF with forward invariance of set $\mathbb{C}$, we first define the following auxiliary set $\mathbb{S}$
\begin{equation}\label{set_S}
	\begin{aligned}
		\mathbb{S} &= \{ \boldsymbol{x}\in \mathbb{X}:h(\boldsymbol{x})\geq h_s\},\\
		\partial \mathbb{S}&=\{ \boldsymbol{x}\in \mathbb{X}:h(\boldsymbol{x})=h_s\},\\
		{\rm Int}\mathbb{S}&=\{ \boldsymbol{x}\in \mathbb{X}:h(\boldsymbol{x})>h_s\},
	\end{aligned}
\end{equation}
where $\partial \mathbb{S}$, ${\rm Int}\mathbb{S}$ are the boundary and the interior of the set $\mathbb{S}$, respectively. In (\ref{set_S}), $h_s\in\mathbb{R}_+$ is the solution of $\alpha(h)-\beta(1/h) = 0$. The following lemma demonstrates the existence and uniqueness of $h_s$.
\begin{lemma}
    For any extended class $\mathcal{K}$ functions $\alpha(s)$ and $\beta(s)$ with $s\in\mathbb{R}_+$, there exists a unique $s^*\in\mathbb{R}_+$ such that $\alpha(s^*)-\beta(1/s^*)=0$.
\end{lemma}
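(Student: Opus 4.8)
The plan is to reduce the claim to an application of the intermediate value theorem together with strict monotonicity. First I would introduce the auxiliary function $\phi(s) := \alpha(s) - \beta(1/s)$, defined for $s \in \mathbb{R}_+$, so that the asserted equation $\alpha(s^*) - \beta(1/s^*) = 0$ is exactly $\phi(s^*) = 0$. Since $\alpha$ and $\beta$ are continuous and $s \mapsto 1/s$ is continuous on $\mathbb{R}_+$, the function $\phi$ is continuous on $\mathbb{R}_+$, so the problem becomes locating a zero of a continuous scalar function.

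Next I would establish that $\phi$ is strictly increasing on $\mathbb{R}_+$, which will immediately deliver uniqueness. Because $\alpha$ is an extended class $\mathcal{K}$ function, $s \mapsto \alpha(s)$ is strictly increasing. Because $\beta$ is strictly increasing while $s \mapsto 1/s$ is strictly decreasing on $\mathbb{R}_+$, the composition $s \mapsto \beta(1/s)$ is strictly decreasing, hence $s \mapsto -\beta(1/s)$ is strictly increasing. The sum of two strictly increasing functions is strictly increasing, so $\phi$ is strictly increasing and can have at most one root.

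For existence I would examine the behavior of $\phi$ near the two ends of $\mathbb{R}_+$. Using $\alpha(0) = \beta(0) = 0$ together with continuity and strict monotonicity, one has $\alpha(s) > 0$ and $\beta(s) > 0$ for every $s > 0$. As $s \to 0^+$, $\alpha(s) \to 0$ while $\beta(1/s)$ stays bounded below by the fixed positive value $\beta(1/s_0)$ for all $s \leq s_0$; hence $\phi(s) < 0$ for all sufficiently small $s$, and I can fix an explicit $s_a$ with $\phi(s_a) < 0$. Symmetrically, as $s \to +\infty$ one has $\beta(1/s) \to 0$ while $\alpha(s)$ remains bounded below by $\alpha(s_1) > 0$ for all $s \geq s_1$; hence $\phi(s) > 0$ for all sufficiently large $s$, giving a point $s_b$ with $\phi(s_b) > 0$. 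Applying the intermediate value theorem on $[s_a, s_b]$ yields a root $s^* \in \mathbb{R}_+$, and the strict monotonicity established above forces it to be unique.

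I expect the only delicate point to be this endpoint analysis, since extended class $\mathcal{K}$ functions need not be radially unbounded, so I cannot rely on $\alpha(s) \to \infty$ or $\beta(1/s) \to \infty$. The remedy is exactly the argument above: I do not need any divergence, only that $\alpha(s) \to 0$ as $s \to 0^+$ and $\beta(1/s) \to 0$ as $s \to +\infty$ (both from continuity and $\alpha(0) = \beta(0) = 0$), combined with strict positivity away from the origin, which is enough to make $\phi$ straddle zero and close the argument.
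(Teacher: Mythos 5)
Your proof is correct and follows the same skeleton as the paper's: define $\phi(s)=\alpha(s)-\beta(1/s)$, get uniqueness from strict monotonicity (increasing plus increasing), and get existence from a sign change via the intermediate value theorem. The one place you diverge is the endpoint analysis, and there your version is actually \emph{more} careful than the paper's. The paper asserts $\lim_{s\to 0}f(s)=-\infty$ and $\lim_{s\to\infty}f(s)=+\infty$, which is false in general: an extended class $\mathcal{K}$ function need not be unbounded (take $\alpha(s)=\beta(s)=\tanh(s)$, for which $f(s)\to\mp 1$ at the two ends, not $\mp\infty$). Your argument correctly identifies that divergence is never needed --- only that $\alpha(s)\to 0$ as $s\to 0^+$ while $\beta(1/s)$ stays bounded below by a positive constant, and symmetrically at $+\infty$ --- so $\phi$ straddles zero on a compact interval $[s_a,s_b]$ and the IVT applies. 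In short: same route, but your endpoint treatment repairs a (harmless for the conclusion, but literally incorrect) overstatement in the paper's proof, and in doing so covers bounded extended class $\mathcal{K}$ functions that the paper's stated limits do not.
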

\begin{proof}
    Defining $f(s) = \alpha(s)-\beta(1/s)$ on $s\in\mathbb{R}_+$, then we have 
    \begin{equation}
        \begin{aligned}
            \lim_{s\rightarrow 0} f(s) = -\infty,\;        \lim_{s\rightarrow \infty} f(s) = +\infty.
        \end{aligned}
    \end{equation}
    Since $\alpha(s)$ is a strictly increasing function and $\beta(1/s)$ is a strictly decreasing function, it follows that $f(s)$ is a strictly increasing continuous function on $s\in\mathbb{R}_+$. Based on the Intermediate Value Theorem \cite{bartle2000introduction}, there exists a unique point $s^*$ such that $\alpha(s^*) = \beta(1/s^*)$. This completes the proof.
\end{proof}

Then, it is able to conclude the following Theorem.
\begin{theorem}
	Consider the system (\ref{sys_0}) and the set $\boldsymbol{\mathbb{S}}$ given in (\ref{set_S}). With a valid RRBF $h: \mathbb{R}^n\rightarrow\mathbb{R}$, if the initial conditions of $h(\boldsymbol{x})\in {\rm Int \mathbb{C}}$, then ${\rm Int \mathbb{C}}$ is forward invariant.
\end{theorem}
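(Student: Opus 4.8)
The plan is to reduce the forward invariance of ${\rm Int}\mathbb{C}$ to a one-dimensional positivity argument on the scalar signal $\eta(t) := h(\boldsymbol{x}(t))$, mirroring the reasoning used for the comparison system in Lemma 1. First I would differentiate $\eta$ along the unforced dynamics (\ref{sys_0}): by the chain rule $\dot\eta = \mathcal{L}_{\boldsymbol{f}}h(\boldsymbol{x})$, and the defining RRBF inequality (\ref{rrbf}) immediately yields the scalar differential inequality $\dot\eta \geq -\alpha(\eta) + \beta(1/\eta)$, valid as long as $\boldsymbol{x}(t)\in{\rm Int}\mathbb{C}$, i.e. as long as $\eta(t)>0$. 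This is exactly the nonlinear (class-$\mathcal{K}$) analogue of the comparison dynamics (\ref{ode}).

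Next I would characterise the right-hand side $g(s) := -\alpha(s) + \beta(1/s)$ on $s\in\mathbb{R}_+$. By the argument in the proof of Lemma 2, $f(s)=\alpha(s)-\beta(1/s)$ is strictly increasing with a unique zero $h_s$, so $g=-f$ is strictly decreasing, with $g(s)>0$ on $(0,h_s)$, $g(h_s)=0$, and $g(s)<0$ on $(h_s,\infty)$. The crucial feature is the boundary behaviour: since $\beta$ is extended class-$\mathcal{K}$ and $\alpha(0)=0$, one has $g(s)\to+\infty$ as $s\to0^+$. Thus the reciprocal resistance term dominates near the boundary and forces $\dot\eta$ to be strictly positive there, which is the mechanism that keeps the trajectory away from the set $\{\boldsymbol{x}:h(\boldsymbol{x})=0\}$.

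The core step is then a barrier (first-exit) argument establishing a strictly positive lower bound. Set $h_m := \min\{\eta(t_0), h_s\} > 0$ and define $T=\inf\{t\in\mathcal{I}_x(\boldsymbol{x}(t_0)) : \eta(t) < h_m\}$. On $[t_0, T)$ the trajectory lies in ${\rm Int}\mathbb{C}$, so the inequality $\dot\eta \geq g(\eta)$ holds; by monotonicity of $g$ we have $g(s)\geq g(h_m)\geq g(h_s)=0$ for all $s\in(0,h_m]$, so whenever $\eta$ approaches $h_m$ from above its derivative satisfies $\dot\eta \geq g(h_m) \geq 0$ and the level $h_m$ cannot be crossed downward. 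At a putative first crossing time $T$, continuity gives $\eta(T)=h_m$ while $\dot\eta(T)\geq g(h_m)\geq 0$, contradicting that $\eta$ is decreasing through $h_m$; hence no such $T$ exists and $\eta(t)\geq h_m>0$ on the whole interval. Equivalently, one may invoke the comparison lemma against $\dot z = g(z)$, whose solution remains in $[\,\min\{z(t_0),h_s\},\infty)$, or Nagumo's theorem as in Lemma 1, the field being inward-pointing since $\dot\eta\geq g(h_m)\geq 0$ on the boundary level.

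Finally I would close the argument by extending the maximal interval: because $\boldsymbol{x}(t)$ remains in the region $\{\boldsymbol{x}:h(\boldsymbol{x})\geq h_m\}\subseteq{\rm Int}\mathbb{C}\subseteq\mathbb{X}$, a subset of the compact admissible set, the solution admits no finite escape time, so $\tau_{max}=\infty$ and $\boldsymbol{x}(t)\in{\rm Int}\mathbb{C}$ for all $t\geq t_0$. I expect the main obstacle to be the rigorous treatment near the boundary: $g$ is unbounded and only continuous (class-$\mathcal{K}$ functions need not be Lipschitz) as $s\to0^+$, so one must justify the first-exit/Nagumo step without appealing to global Lipschitz regularity of the comparison field, and carefully separate the interior case $\eta(t_0)<h_s$ (where $\eta$ is strictly pushed upward, $g(h_m)>0$) from the boundary case $\eta(t_0)\geq h_s$ (where $g(h_m)=0$ and one must instead use the strict positivity $g(s)>0$ for $s<h_s$ to show $h_s$ itself acts as an invariant floor).
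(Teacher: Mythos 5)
Your proposal is correct, and it reaches the conclusion by a genuinely different route than the paper, even though both arguments rest on the same mechanism: Lemma 2's unique zero $h_s$ of $\alpha(s)-\beta(1/s)$, so that $\dot h \geq -\alpha(h)+\beta(1/h) > 0$ whenever $0<h<h_s$. The paper packages this geometrically in two cases: for $\boldsymbol{x}(t_0)\in\mathbb{S}$ it checks $\dot h\geq 0$ on $\partial\mathbb{S}$ and invokes Nagumo's theorem to render $\mathbb{S}$ itself forward invariant; for $\boldsymbol{x}(t_0)\in{\rm Int}\mathbb{C}\setminus\mathbb{S}$ it introduces the Lyapunov-like function $V_C=h_s-h(\boldsymbol{x})$, shows $\dot V_C<0$ there, and concludes trajectories converge to $\mathbb{S}$, which (being invariant) traps them in ${\rm Int}\mathbb{C}$. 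You instead run a single scalar first-exit argument on $\eta(t)=h(\boldsymbol{x}(t))$ with the explicit invariant floor $h_m=\min\{\eta(t_0),h_s\}$, avoiding Nagumo altogether. Your version buys a quantitative lower bound $\eta(t)\geq h_m$, an honest treatment of the non-Lipschitz comparison field (the first-exit argument needs no uniqueness for $\dot z=g(z)$, a point the paper never confronts), and an explicit continuation argument ruling out finite escape, which the paper's Theorem 1 proof omits (it appears only in Lemma 1). The paper's version buys the extra qualitative conclusion that trajectories in the buffer ${\rm Int}\mathbb{C}\setminus\mathbb{S}$ converge toward $\partial\mathbb{S}$, which the authors reuse later (Fig.~2 and the ZBF/RBF comparison) and which your floor argument does not deliver. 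One caution: your first phrasing of the crossing contradiction, namely that $\eta(T)=h_m$ with $\dot\eta(T)\geq g(h_m)\geq 0$ contradicts a downward crossing, is insufficient on its own when $h_m=h_s$ and $g(h_m)=0$; but you correctly flag this in your closing paragraph, and the fix you sketch is exactly right: if $\eta(t_1)<h_m$ for some $t_1$, set $t_2=\sup\{t\leq t_1:\eta(t)\geq h_m\}$, note $\eta(t_2)=h_m$ and $\eta<h_m\leq h_s$ on $(t_2,t_1]$, so $\dot\eta\geq g(\eta)>0$ there and integration gives $\eta(t_1)>h_m$, a contradiction. With that repair made explicit, your argument is a complete and, in places, more rigorous proof of the stated theorem.
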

\begin{proof}
    First, consider the case where the initial condition satisfies $\boldsymbol{x}(t_0) \in \mathbb{S}$. For any state $\boldsymbol{x} \in \partial \mathbb{S}$ (i.e., on the boundary of $\mathbb{S}$), it follows from the RRBF condition (\ref{rrbf}) and Lemma 2 that
    \begin{equation}
		\begin{aligned}
			\dot{h}(\boldsymbol{x}) &\geq -\alpha( h(\boldsymbol{x})) + \beta(1/h(\boldsymbol{x})) = 0,\; \forall \boldsymbol{x}\in\partial\mathbb{S}.
		\end{aligned}
    \end{equation}
    According to Nagumo's theorem \cite{blanchini1999set,brezis1970characterization}, the set $\mathbb{S}$ is forward invariant.

    Next, for the case where the initial state lies in the region $\mathrm{Int}\mathbb{C} \setminus \mathbb{S}$, define the following Lyapunov-like function
    \begin{equation}
        V_C(\boldsymbol{x}) = \left\{
        \begin{aligned}
            &0,\;\; {\rm if}\; \boldsymbol{x}\in\mathbb{S}\\
            &h_s - h(\boldsymbol{x}),\;\; {\rm if}\; \boldsymbol{x}\in{\rm Int}\mathbb{C}\setminus\mathbb{S}
        \end{aligned}
        \right.
    \end{equation}
    It is clear that $V_C(\boldsymbol{x})$ is positive $\forall \boldsymbol{x}\in{\rm Int}\mathbb{C}\setminus\mathbb{S}$ and 
    \begin{equation}
        \dot{V}_C(\boldsymbol{x}) \leq \alpha( h(\boldsymbol{x})) - \beta(1/h(\boldsymbol{x})) < 0, \; \forall \boldsymbol{x}\in{\rm Int}\mathbb{C}\setminus\mathbb{S}.
    \end{equation}
    This implies that $V_C$ decreases along system trajectories outside $\mathbb{S}$, and thus the trajectory asymptotically converges to $\mathbb{S}$. Since $\mathbb{S}$ itself is forward invariant, it concludes that any trajectory starting in $\mathrm{Int}\mathbb{C}$ will remain in $\mathrm{Int}\mathbb{C}$. This completes the proof. 
\end{proof}

In the following, we demonstrate the enhanced robustness against disturbances $\boldsymbol{d}$ of RRBF.
\begin{figure}[t]
	\centering
	\includegraphics[width=0.4\textwidth]{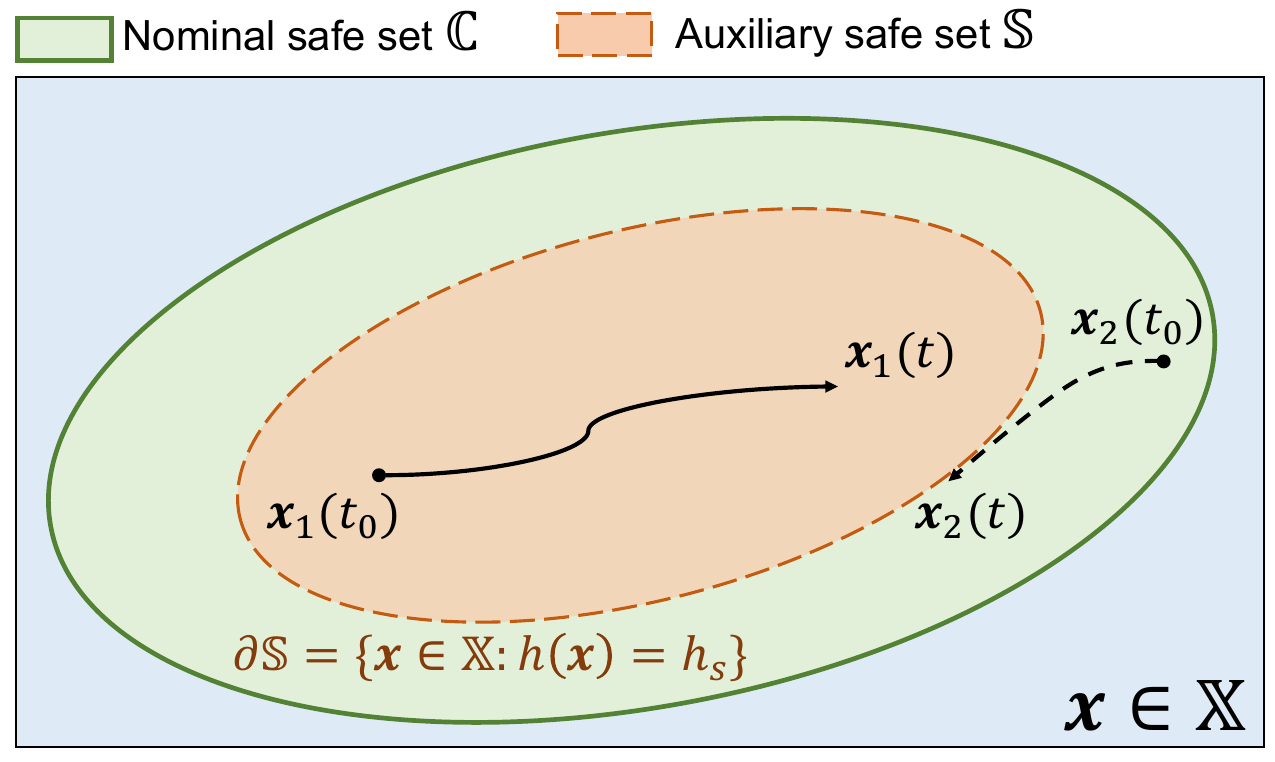}
	\vspace{-0pt}
	\caption{Illustrative set invariance of RRBF. Two different initial conditions are presented, i.e., $\boldsymbol{x}_1(t_0)\in\mathbb{S}$ and $\boldsymbol{x}_2(t_0)\in{\rm Int}\mathbb{C}/\mathbb{S}$. The trajectory starts from $\boldsymbol{x}_1(t_0)$ will be within the set $\mathbb{S}$, and the trajectory starts from $\boldsymbol{x}_2(t_0)$ will converge to the boundary $\partial \mathbb{S}$.}
	\label{fig_RRBF}
\end{figure}
\subsection{Set invariance with bounded disturbance}
Consider the nonlinear system (\ref{sys0}) with $\boldsymbol{d}(t,\boldsymbol{x})$.
\begin{equation}
	\begin{aligned}
		&\dot{\boldsymbol{x}} = \boldsymbol{f}(\boldsymbol{x}) + \boldsymbol{d}(t,\boldsymbol{x}).
	\end{aligned}
	\label{sys_d}
\end{equation}
We will show that the forward invariance of the system (\ref{sys_d}) with respect to $\mathrm{Int}\mathbb{C}$ is still maintained with a valid RRBF defined in (\ref{rrbf}).
\begin{theorem}
	Consider the system (\ref{sys_d}). With a valid RRBF $h: \mathbb{R}^n\rightarrow\mathbb{R}$, if the initial conditions of $h(\boldsymbol{x})\in \mathrm{Int}\mathbb{C}$, then $\mathrm{Int}\mathbb{C}$ is forward invariant.
\end{theorem}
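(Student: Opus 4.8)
The plan is to reduce the forward invariance of $\mathrm{Int}\mathbb{C}$ under the disturbed dynamics (\ref{sys_d}) to the scalar positivity analysis already carried out in Lemma 1, by showing that $h(\boldsymbol{x}(t))$ obeys a differential inequality of exactly the same structure as the worst-case branch of (\ref{ode_w}). First I would differentiate $h$ along (\ref{sys_d}) to obtain $\dot{h}(\boldsymbol{x}) = \mathcal{L}_{\boldsymbol{f}}h(\boldsymbol{x}) + \frac{\partial h}{\partial \boldsymbol{x}}\boldsymbol{d}(t,\boldsymbol{x})$ and apply the RRBF condition (\ref{rrbf}) to the drift term, giving $\dot{h}(\boldsymbol{x}) \geq -\alpha(h(\boldsymbol{x})) + \beta(1/h(\boldsymbol{x})) + \frac{\partial h}{\partial \boldsymbol{x}}\boldsymbol{d}$.

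The next step is to absorb the disturbance contribution into a single bounded signal: since $h$ is continuously differentiable, $\mathbb{X}$ is compact, and $\boldsymbol{d}(t,\boldsymbol{x}) \in \mathbb{D}$ with $\mathbb{D}$ compact, the term $w(t,\boldsymbol{x}) := \frac{\partial h}{\partial \boldsymbol{x}}\boldsymbol{d}(t,\boldsymbol{x})$ is bounded, i.e. $|w| \leq \bar{w}$ for some $\bar{w}\in\mathbb{R}_+$. Crucially, this bound never has to be known to the designer; it only has to exist. This yields $\dot{h} \geq -\alpha(h) + \beta(1/h) - \bar{w}$, which is precisely the worst-case inequality governing $z$ in the proof of Lemma 1, now with the constants $\alpha,\beta$ replaced by the extended class $\mathcal{K}$ functions.

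Then I would reproduce the buffer-zone/Nagumo argument. I would define the threshold $h_1$ as the unique positive root of $-\alpha(h) + \beta(1/h) - \bar{w} = 0$; its existence and uniqueness follow from the monotonicity established in Lemma 2, since $\beta(1/s) - \alpha(s)$ is strictly decreasing, tends to $+\infty$ as $s \to 0^+$, and therefore crosses the level $\bar{w}$ exactly once, with $0 < h_1 < h_s$. For every $c \in (0, h_1]$ one has $-\alpha(c) + \beta(1/c) \geq \bar{w}$, hence $\dot{h} \geq 0$ on the level set $\{h = c\}$, so the vector field points inward there. Invoking Nagumo's theorem on the superlevel set $\{\boldsymbol{x}\in\mathbb{X}: h(\boldsymbol{x}) \geq c\}$ with $c = \min\{h(\boldsymbol{x}(t_0)), h_1\} > 0$ shows this set is forward invariant, so that $h(\boldsymbol{x}(t)) \geq c > 0$ for all $t$ in the maximal interval, i.e. $\boldsymbol{x}(t) \in \mathrm{Int}\mathbb{C}$. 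Finally, mirroring Lemma 1, I would argue the solution is well-defined on $[t_0,\infty)$: since $h$ is continuous and bounded away from zero along the trajectory, the solution cannot reach the boundary $\partial\mathbb{C}$ in finite time.

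I expect the main obstacle to be the technical passage from Lemma 1, which is stated for the specific reciprocal dynamics $-\alpha z + \beta/z$ with scalar constants, to the general extended class $\mathcal{K}$ setting of the RRBF: one must verify that the existence, uniqueness and strict ordering $0 < h_1 < h_s$ of the inward-pointing threshold survive, and that $\beta(1/h) \to +\infty$ as $h \to 0^+$ so the reciprocal resistance term genuinely dominates the bounded disturbance near $\partial\mathbb{C}$. A secondary care point is ensuring the disturbance bound $\bar{w}$ is \emph{uniform}, which is exactly where compactness of $\mathbb{X}$ and $\mathbb{D}$ together with $h\in C^1$ is essential; and one should confirm that Nagumo's condition, applied on the closed superlevel set, is correctly transferred to the conclusion phrased for the open interior $\mathrm{Int}\mathbb{C}$.
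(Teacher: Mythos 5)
Your proposal is correct and takes essentially the same route as the paper's own proof: bound the disturbance term $\frac{\partial h(\boldsymbol{x})}{\partial \boldsymbol{x}}\boldsymbol{d}$ by a constant (the paper's $D$, your $\bar{w}$) via compactness of $\mathbb{X}$ and $\mathbb{D}$, obtain the unique threshold (the paper's $h_b$, your $h_1$) from the Lemma 2 monotonicity argument applied to $\alpha(h)+D=\beta(1/h)$, and conclude $\dot{h}(\boldsymbol{x})\geq 0$ on $(0,h_b]$ so that Lemma 1's Nagumo-type positivity argument yields forward invariance of ${\rm Int}\,\mathbb{C}$. Your explicit invariant superlevel set with $c=\min\{h(\boldsymbol{x}(t_0)),h_1\}$, and your flagged care points (unboundedness of $\beta(1/s)$ as $s\to 0^+$ and the ordering $0<h_1<h_s$), simply spell out details the paper compresses into ``following the positivity analysis in Lemma 1.''
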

\begin{proof}
	Considering (\ref{rrbf}) and (\ref{sys_d}), we have 
	\begin{equation}
		\begin{aligned}
			\dot{h}(\boldsymbol{x}) &\geq -\alpha (h(\boldsymbol{x})) + \beta({1/h(\boldsymbol{x})}) + \frac{\partial h(\boldsymbol{x})}{\partial \boldsymbol{x}}\boldsymbol{d}.
		\end{aligned}
	\end{equation}
	Since the $h(\boldsymbol{x})$ is a continuously differentiable function, then $\frac{\partial h(\boldsymbol{x})}{\partial \boldsymbol{x}}$ is continuous and bounded for $\boldsymbol{x} \in \mathbb{X}$. It is obtained that
	 \begin{equation}
	 	\begin{aligned}
	 		\dot{h}(\boldsymbol{x}) &\geq -\alpha (h(\boldsymbol{x})) + \beta({1/h(\boldsymbol{x})}) - \bigg \vert \frac{\partial h(\boldsymbol{x})}{\partial \boldsymbol{x}}\boldsymbol{d} \bigg \vert,\\
            &\geq -\alpha (h(\boldsymbol{x})) + \beta({1/h(\boldsymbol{x})}) - D,
	 	\end{aligned}
	 \end{equation}
	 where $D\in\mathbb{R}_+$ is the upper bound of $ \frac{\partial h(\boldsymbol{x})}{\partial \boldsymbol{x}}\boldsymbol{d},\;\forall \boldsymbol{x}\in\mathbb{X},\;\forall \boldsymbol{d}\in\mathbb{D}$. Following a similar analysis provided in Lemma 2, there exists a unique $h_b\in\mathbb{R}_+$ such that $\alpha(h_b)+D = \beta(1/h_b)$.
     
     Furthermore, it is obtained that $\dot{h}(\boldsymbol{x}) \geq 0$ with $h(\boldsymbol{x})\in(0,h_b]$. Following the positivity analysis in Lemma 1, it is concluded that $h(\boldsymbol{x})$ is positive if $h(\boldsymbol{x}(t_0)) \in \mathrm{Int}\mathbb{C}$ even in the presence of disturbance $\boldsymbol{d}(t,\boldsymbol{x})$. This completes the proof.		  
\end{proof}
From the above analysis, the proposed RRBF ensures a critical safety guarantee in the presence of any bounded disturbance $\boldsymbol{d}(t,\boldsymbol{x})$ without requiring additional prior information.
\subsection{Relationships of RRBF with RBF and ZBF}
Before proceeding with a detailed comparison, we briefly revisit the definitions of RBFs and ZBFs.
\begin{definition}
    Considering the system (\ref{sys_0}), a continuously differentiable function $B: {\rm Int}\mathbb{C}\rightarrow\mathbb{R}$ is a reciprocal barrier function for a continuously differentiable function $h: \mathbb{R}^n\rightarrow \mathbb{R}$ if there exist class $\mathcal{K}$ functions $\alpha_1$, $\alpha_2$ and $\alpha_3$ such that
	\begin{equation}\label{rbf}
		\begin{aligned}
            \frac{1}{\alpha_1(h(\boldsymbol{x}))} &\leq B(\boldsymbol{x}) \leq \frac{1}{\alpha_2(h(\boldsymbol{x}))},\\
		  \mathcal{L}_{\boldsymbol{f}}B(\boldsymbol{x}) &\leq \alpha_3 (h(\boldsymbol{x})),
		\end{aligned}
	\end{equation} 
    for all $\boldsymbol{x}\in{\rm Int}\mathbb{C}$.
\end{definition}
\begin{definition}
    Considering the system (\ref{sys_0}), a continuously differentiable function $h: \mathbb{R}^n\rightarrow \mathbb{R}$ is a zeroing barrier function, if there exists an extended class $\mathcal{K}$ function $\alpha_e$ such that
	\begin{equation}\label{zbf}
		\begin{aligned}
		  \mathcal{L}_{\boldsymbol{f}}h(\boldsymbol{x}) &\geq -\alpha_e(h(\boldsymbol{x}))
		\end{aligned}
	\end{equation} 
    for all $\boldsymbol{x}\in\mathbb{X}$.
\end{definition}

Based on the definitions above, we summarize their key differences among ZBF, RBF and RRBF:
\begin{itemize}
    \item [(1).] \underline{\textit{Forward invariance}}. ZBF ensures forward invariance for the entire set $\mathbb{C}$, while RBF and RRBF guarantee forward invariance for ${\rm Int}\mathbb{C}$. To be specific, RRBF first enforces forward invariance for the subset $\mathbb{S}$, and then provides a buffer zone near the $\partial\mathbb{C}$. Trajectories initialized in this buffer will converge toward the boundary $\partial\mathbb{S}$, as illustrated in Fig. \ref{fig_RRBF}. 
    \item [(2).] \underline{\textit{Robustness}}. As $\boldsymbol{x}(t)$ approaches $\partial\mathbb{C}$, both RBF and ZBF experience diminishing growth rates, making them highly sensitive to disturbances/uncertainties $\boldsymbol{d}(t,\boldsymbol{x})$. In contrast, RRBF exhibits superior robustness to bounded disturbances, as analyzed in Theorem 2, by maintaining a non-zero domination effect near $\partial\mathbb{C}$. 
    \item [(3).] \underline{\textit{Singularity issue}}. It should be noted that the proposed RRBF does not explicitly account for the boundary $\partial \mathbb{C}$, i.e., $h(\boldsymbol{x}) = 0$. This demonstrates that improving robustness against disturbances comes at the cost of introducing singularity concerns. A practical remedy involves replacing the term $1/h(\boldsymbol{x})$ with $1/[h(\boldsymbol{x}) + \sigma]$, where $\sigma$ is a small positive constant. This modification mitigates singularities near the boundary and is formalized in Lemma 3 in the appendix. It should be noted that the choice of $\sigma$ should be guided by the known upper bound of $\boldsymbol{d}$.
\end{itemize}
\section{Reciprocal resistance-based control barrier function}
In this section, we first introduce the definition of the RRCBF for system (\ref{sys}). The concept is then extended to cases where the input relative degree of $h(\boldsymbol{x})$ with respect to system (\ref{sys}) is greater than one. Finally, we integrate disturbance estimation and compensation techniques with RRCBF to further improve the control performance with complex uncertainties.
\subsection{Reciprocal resistance-based control barrier function}
Consider the disturbed affine nonlinear system in (\ref{sys}).
\begin{definition}
	Considering the system (\ref{sys}), a continuously differentiable function $h: \mathbb{R}^n\rightarrow\mathbb{R}$ is a reciprocal resistance-based control barrier function if there exist extended class $\mathcal{K}$ functions $\alpha$ and $\beta$ subject to
	\begin{equation}
		\begin{aligned}
			\sup_{\boldsymbol{u}\in \mathbb{R}^m} \bigg[ \mathcal{L}_{\boldsymbol{f}}h(\boldsymbol{x}) + \mathcal{L}_{\boldsymbol{g}}h(\boldsymbol{x})\boldsymbol{u} + \alpha (h(\boldsymbol{x})) - \beta({1/h(\boldsymbol{x})}) \bigg] \geq 0
		\end{aligned}
		\label{RRcbf}
	\end{equation}
    for all $\boldsymbol{x}\in{\rm Int}\mathbb{C}$.
\end{definition}
Given an RRCBF $h(\boldsymbol{x})$, define the set of control input $\boldsymbol{u}$.
\begin{equation}\label{K-RRCBF}
    \begin{aligned}
        \mathcal{K}_{\rm RRCBF} = \{ \boldsymbol{u}\in\mathbb{R}^m: \mathcal{L}_{\boldsymbol{f}}h(\boldsymbol{x}) + \mathcal{L}_{\boldsymbol{g}}h(\boldsymbol{x})\boldsymbol{u} + \alpha (h(\boldsymbol{x}))\\
        - \beta({1/h(\boldsymbol{x})}) \geq 0 \}.
    \end{aligned}
\end{equation}
The following results guarantee the forward invariance of ${\rm Int}\mathbb{C}$.
\begin{corollary}
	Consider the system (\ref{sys}) with bounded disturbances $\boldsymbol{d}$. Giving a valid RRCBF $h: \mathbb{R}^n\rightarrow\mathbb{R}$, if the initial conditions of $h(\boldsymbol{x}(t_0))\in{\rm Int}\mathbb{C}$, then any Lipschitz continuous controllers belong to the set $\mathcal{K}_{\rm RRCBF}$ will render the set ${\rm Int}\mathbb{C}$ forward invariant.
\end{corollary}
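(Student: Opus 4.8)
The plan is to recognize this Corollary as a direct closed-loop specialization of Theorem~2: once a controller is fixed, the disturbed affine system~(\ref{sys}) collapses into exactly the form~(\ref{sys_d}) already handled there. Concretely, I would first fix any Lipschitz continuous $\boldsymbol{k}(\boldsymbol{x})\in\mathcal{K}_{\rm RRCBF}$ and substitute it into~(\ref{sys}), rewriting the closed-loop dynamics as $\dot{\boldsymbol{x}}=\boldsymbol{f}_{cl}(\boldsymbol{x})+\boldsymbol{d}_{cl}(t,\boldsymbol{x})$, where $\boldsymbol{f}_{cl}(\boldsymbol{x}):=\boldsymbol{f}(\boldsymbol{x})+\boldsymbol{g}(\boldsymbol{x})\boldsymbol{k}(\boldsymbol{x})$ is the nominal closed-loop drift and $\boldsymbol{d}_{cl}(t,\boldsymbol{x}):=\boldsymbol{g}(\boldsymbol{x})\boldsymbol{d}(t,\boldsymbol{x})$ is the effective disturbance. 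This is structurally identical to~(\ref{sys_d}) with $\boldsymbol{f}$ and $\boldsymbol{d}$ replaced by $\boldsymbol{f}_{cl}$ and $\boldsymbol{d}_{cl}$, respectively.

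The second step is to verify that $h$ remains a valid RRBF for the nominal drift $\boldsymbol{f}_{cl}$. By the definition of $\mathcal{K}_{\rm RRCBF}$ in~(\ref{K-RRCBF}), membership $\boldsymbol{k}\in\mathcal{K}_{\rm RRCBF}$ is precisely the statement that $\mathcal{L}_{\boldsymbol{f}}h(\boldsymbol{x})+\mathcal{L}_{\boldsymbol{g}}h(\boldsymbol{x})\boldsymbol{k}(\boldsymbol{x})+\alpha(h(\boldsymbol{x}))-\beta(1/h(\boldsymbol{x}))\geq 0$ for all $\boldsymbol{x}\in{\rm Int}\mathbb{C}$. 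Since $\mathcal{L}_{\boldsymbol{f}_{cl}}h=\mathcal{L}_{\boldsymbol{f}}h+\mathcal{L}_{\boldsymbol{g}}h\,\boldsymbol{k}$, this is exactly condition~(\ref{rrbf}) with $\boldsymbol{f}$ replaced by $\boldsymbol{f}_{cl}$. Hence $h$ is a valid RRBF for the nominal closed-loop system, and all the hypotheses needed to invoke Theorem~2 are in place on the structural side.

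The remaining step is a regularity and boundedness check so that Theorem~2 applies verbatim. Because $\boldsymbol{g}$ is locally Lipschitz, it is continuous and therefore bounded on the compact set $\mathbb{X}$; since $\boldsymbol{d}$ takes values in the compact set $\mathbb{D}$, the product $\boldsymbol{d}_{cl}=\boldsymbol{g}\boldsymbol{d}$ is bounded, and the relevant quantity $\tfrac{\partial h}{\partial\boldsymbol{x}}\boldsymbol{d}_{cl}=\mathcal{L}_{\boldsymbol{g}}h\,\boldsymbol{d}$ admits a uniform bound $D\in\mathbb{R}_+$ over $\mathbb{X}\times\mathbb{D}$ by continuity of the $C^1$ term $\mathcal{L}_{\boldsymbol{g}}h$. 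The closed-loop vector field $\boldsymbol{f}_{cl}+\boldsymbol{d}_{cl}$ is locally Lipschitz in $\boldsymbol{x}$ (as $\boldsymbol{f},\boldsymbol{g}$ are locally Lipschitz and $\boldsymbol{k}$ is Lipschitz) and continuous in $t$, guaranteeing a unique maximal solution. Theorem~2 then furnishes a unique $h_b\in\mathbb{R}_+$ solving $\alpha(h_b)+D=\beta(1/h_b)$ with $\dot{h}(\boldsymbol{x})\geq 0$ for $h(\boldsymbol{x})\in(0,h_b]$, and concludes positivity of $h(\boldsymbol{x}(t))$ whenever $h(\boldsymbol{x}(t_0))\in{\rm Int}\mathbb{C}$; therefore ${\rm Int}\mathbb{C}$ is forward invariant.

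I expect the only genuine subtlety to be the boundedness argument for the effective disturbance: one must confirm that $D$ exists \emph{uniformly} in $\boldsymbol{x}\in\mathbb{X}$ and $\boldsymbol{d}\in\mathbb{D}$, and that this bound is independent of the particular admissible controller $\boldsymbol{k}$, so that the same buffer-zone conclusion of Theorem~2 holds for every $\boldsymbol{k}\in\mathcal{K}_{\rm RRCBF}$. Compactness of $\mathbb{X}$ and $\mathbb{D}$ together with the continuity of $\mathcal{L}_{\boldsymbol{g}}h$ resolve this, since the bound $D$ is extracted from $\mathcal{L}_{\boldsymbol{g}}h\,\boldsymbol{d}$ alone and does not involve $\boldsymbol{k}$. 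Everything else is a mechanical transcription of Theorem~2 into the closed-loop variables.
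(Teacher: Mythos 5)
Your proposal is correct and follows essentially the same route as the paper's own proof: substitute the admissible controller to obtain $\dot{h}(\boldsymbol{x}) \geq -\alpha(h(\boldsymbol{x})) + \beta(1/h(\boldsymbol{x})) + \mathcal{L}_{\boldsymbol{g}}h(\boldsymbol{x})\boldsymbol{d}$, bound the residual term $\mathcal{L}_{\boldsymbol{g}}h(\boldsymbol{x})\boldsymbol{d}$ using compactness of $\mathbb{X}$ and $\mathbb{D}$, and invoke Theorems~1 and~2 to conclude forward invariance of ${\rm Int}\mathbb{C}$. Your version is merely more explicit than the paper's (in recasting the closed loop as an instance of~(\ref{sys_d}), verifying that $h$ is an RRBF for the closed-loop drift, and noting that the bound $D$ is uniform and independent of the chosen controller), which fills in details the paper leaves implicit but introduces no new idea.
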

\begin{proof}
	Substituting the control from $\mathcal{K}_{\rm RRCBF}$ into (\ref{sys}), one yields
	\begin{equation}
		\dot{h}(\boldsymbol{x}) \geq -\alpha(h(\boldsymbol{x})) + \beta(1/h(\boldsymbol{x})) + \mathcal{L}_{\boldsymbol{g}}h(\boldsymbol{x})\boldsymbol{d},
	\end{equation}
	Since $h(\boldsymbol{x})$ is continuously differentiable and $g(\boldsymbol{x})$ is Lipschitz continuous, then $\mathcal{L}_{\boldsymbol{g}}h(\boldsymbol{x})$ is bounded for $\boldsymbol{x}\in\mathbb{X}$. Based on the results provided in Theorems 1 and 2, it is obtained that ${\rm Int}\mathbb{C}$ is forward invariant. This completes the proof.
\end{proof}
\subsection{High-order reciprocal resistance-based control barrier function}
In the following, the case when the $h(\boldsymbol{x})$ has a relative degree $r\geq 2$ is discussed. Here, we assume that $\mathcal{L}_{\boldsymbol{g}}\mathcal{L}^k_{\boldsymbol{f}}h(\boldsymbol{x}) = 0$ with $k\in\mathbb{N}_{0:r-2}$ and $\mathcal{L}_{\boldsymbol{g}}\mathcal{L}^{r-1}_{\boldsymbol{f}}h(\boldsymbol{x}) \neq 0$. Based on \cite{xiao2021high,tan2021high}, we first define a sequence of functions $\psi_i:\mathbb{R}^n\rightarrow\mathbb{R}, i\in\mathbb{N}_{1:r}$, as follows
\begin{equation}\label{psi_state}
	\begin{aligned}
		\psi_0(\boldsymbol{x}) &= h(\boldsymbol{x}),\;\\
		\psi_i(\boldsymbol{x}) &= \dot{\psi}_{i-1}(\boldsymbol{x}) + \alpha_i(\psi_{i-1}(\boldsymbol{x})), i\in\mathbb{N}_{1:r},
	\end{aligned}
\end{equation}
where $\alpha_i$ are extended class $\mathcal{K}$ functions to be selected. We further define the following 0-superlevel sets $\mathbb{S}_{i},\;i\in\mathbb{N}_{0:r-1}$
\begin{equation}
    \begin{aligned}
		\mathbb{S}_{0}&=\{\boldsymbol{x}\in\mathbb{R}^n: \psi_{0}(\boldsymbol{x})\geq 0\},\\
		\mathbb{S}_{i}&=\{\boldsymbol{x}\in\mathbb{R}^n: \psi_{i-2}(\boldsymbol{x})\geq 0\},\; i\in\mathbb{N}_{2:r-2},\\
		\mathbb{S}_{r-1}&=\{\boldsymbol{x}\in\mathbb{R}^n: \psi_{r-1}(\boldsymbol{x})> 0 \}.
    \end{aligned}
\end{equation}
\begin{definition}
	Consider the nonlinear system (\ref{sys}) and $\psi_i(\boldsymbol{x}),\;i\in\mathbb{N}_{1:r}$. A continuously differentiable function $h: \mathbb{R}^n\rightarrow\mathbb{R}$ is a HO-RRCBF with input relative degree $r$ for system (\ref{sys}), if there exist extended class $\mathcal{K}$ functions $\alpha_i,\;i\in\mathbb{N}_{1:r}$ and $\beta$, subject to
	\begin{equation}
		\begin{aligned}
			\sup_{\boldsymbol{u}\in \mathbb{R}^m} \bigg[ \mathcal{L}_{\boldsymbol{f}}^rh(\boldsymbol{x}) + \mathcal{L}_{\boldsymbol{g}}\mathcal{L}_{\boldsymbol{f}}^{r-1}h(\boldsymbol{x})\boldsymbol{u} + \alpha_r(\psi_{r-1}(\boldsymbol{x}))\\ - \beta(1/\psi_{r-1}(\boldsymbol{x}))
		  + O(h(\boldsymbol{x})) \bigg] \geq 0,
		\end{aligned}
		\label{hoRRcbf}
	\end{equation}
	for all $\boldsymbol{x}\in\bar{\mathbb{S}}:=\mathbb{S}_0\cap\mathbb{S}_1\cap\ldots\cap\mathbb{S}_{r-1}$, where $O(h(\boldsymbol{x})) = \sum_{i=1}^{r-1}\mathcal{L}_{\boldsymbol{f}}^i(\alpha_{r-1}\circ\psi_{r-i-1}(\boldsymbol{x}))$. 
\end{definition} 

If there exists a valid HO-RRCBF, the forward invariance of set $\bar{\mathbb{S}}$ is summarized as the following Theorem.
\begin{theorem}
    Consider the system (\ref{sys}) with bounded disturbances $\boldsymbol{d}$. Giving a valid HO-RRCBF $h: \mathbb{R}^n\rightarrow\mathbb{R}$, if the initial conditions of $\psi_{i}(\boldsymbol{x}(t_0)),\;i\in\mathbb{N}_{0:r-1}$ are positive, then any Lipschitz continuous controllers belong to the set $\mathcal{K}_{\rm HO-RRCBF}:=\{\boldsymbol{u}\in\mathbb{R}^m: \mathcal{L}_{\boldsymbol{f}}^rh(\boldsymbol{x}) + \mathcal{L}_{\boldsymbol{g}}\mathcal{L}_{\boldsymbol{f}}^{r-1}h(\boldsymbol{x})\boldsymbol{u} + \alpha_r(\psi_{r-1}(\boldsymbol{x})) - \beta(1/\psi_{r-1}(\boldsymbol{x})) + O(h(\boldsymbol{x}))\geq 0 \}$ will render the set $\bar{\mathbb{S}}$ forward invariant.
\end{theorem}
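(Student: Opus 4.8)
The plan is to recognize that the membership condition defining $\mathcal{K}_{\rm HO-RRCBF}$ collapses into a disturbed reciprocal-resistance inequality for the single top-level function $\psi_{r-1}(\boldsymbol{x})$, after which a finite backward cascade propagates positivity down through $\psi_{r-2},\ldots,\psi_0=h$. First I would differentiate $\psi_{r-1}$ along the disturbed dynamics (\ref{sys}). Because the relative-degree assumption forces $\mathcal{L}_{\boldsymbol{g}}\mathcal{L}_{\boldsymbol{f}}^k h(\boldsymbol{x})=0$ for every $k\in\mathbb{N}_{0:r-2}$, each $\psi_i$ with $i\leq r-2$ has relative degree at least two, so $\mathcal{L}_{\boldsymbol{g}}\psi_i=0$ and neither the control nor the matched disturbance enters any intermediate derivative $\dot{\psi}_0,\ldots,\dot{\psi}_{r-2}$. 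Consequently the recursion $\psi_i=\dot{\psi}_{i-1}+\alpha_i(\psi_{i-1})$ is disturbance-free for $i\leq r-1$, and both $\boldsymbol{u}$ and $\boldsymbol{d}$ surface only at the last stage through $\mathcal{L}_{\boldsymbol{g}}\mathcal{L}_{\boldsymbol{f}}^{r-1}h(\boldsymbol{x})(\boldsymbol{u}+\boldsymbol{d})$. Collecting terms gives
\[
\dot{\psi}_{r-1} = \mathcal{L}_{\boldsymbol{f}}^r h(\boldsymbol{x}) + \mathcal{L}_{\boldsymbol{g}}\mathcal{L}_{\boldsymbol{f}}^{r-1}h(\boldsymbol{x})(\boldsymbol{u}+\boldsymbol{d}) + O(h(\boldsymbol{x})),
\]
so that $\boldsymbol{u}\in\mathcal{K}_{\rm HO-RRCBF}$ is exactly the statement that the nominal part of $\dot{\psi}_{r-1}$ satisfies $\dot{\psi}_{r-1}|_{\boldsymbol{d}=0}\geq-\alpha_r(\psi_{r-1})+\beta(1/\psi_{r-1})$.

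Second, I would treat $\psi_{r-1}$ as a reciprocal-resistance state. Substituting any controller from $\mathcal{K}_{\rm HO-RRCBF}$ into the disturbed dynamics yields
\[
\dot{\psi}_{r-1} \geq -\alpha_r(\psi_{r-1}) + \beta(1/\psi_{r-1}) + \mathcal{L}_{\boldsymbol{g}}\mathcal{L}_{\boldsymbol{f}}^{r-1}h(\boldsymbol{x})\boldsymbol{d}.
\]
Since $h$ is continuously differentiable and $\boldsymbol{f},\boldsymbol{g}$ are locally Lipschitz, $\mathcal{L}_{\boldsymbol{g}}\mathcal{L}_{\boldsymbol{f}}^{r-1}h(\boldsymbol{x})$ is continuous and hence bounded on the compact set $\mathbb{X}$; together with $\boldsymbol{d}\in\mathbb{D}$ compact, the residual term is uniformly bounded by some $D\in\mathbb{R}_+$. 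This is precisely the disturbed reciprocal-resistance inequality of Theorem~2 (built on Lemma~1), with $\psi_{r-1}$ in the role of $z$. Invoking that positivity result, $\psi_{r-1}(\boldsymbol{x}(t_0))>0$ guarantees $\psi_{r-1}(\boldsymbol{x}(t))>0$ for all $t$, so $\mathbb{S}_{r-1}$ is forward invariant under any admissible disturbance.

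Third, I would close the argument by finite backward induction over the lower levels. Assume $\psi_i(\boldsymbol{x}(t))\geq 0$ has been established, the base case $i=r-1$ being the strict positivity just proved. The identity $\psi_i=\dot{\psi}_{i-1}+\alpha_i(\psi_{i-1})$ rearranges to $\dot{\psi}_{i-1}=\psi_i-\alpha_i(\psi_{i-1})\geq-\alpha_i(\psi_{i-1})$, a standard zeroing-barrier condition for $\psi_{i-1}$; on the boundary $\psi_{i-1}=0$ it gives $\dot{\psi}_{i-1}\geq 0$ since $\alpha_i(0)=0$, so by Nagumo's theorem the set $\{\boldsymbol{x}:\psi_{i-1}(\boldsymbol{x})\geq 0\}$ is forward invariant and $\psi_{i-1}(\boldsymbol{x}(t_0))\geq 0$ propagates to all $t$. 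Descending from $i=r-1$ to $i=1$ yields $\psi_{r-2}(\boldsymbol{x}(t))\geq 0,\ldots,\psi_0(\boldsymbol{x}(t))=h(\boldsymbol{x}(t))\geq 0$, so the trajectory remains in $\bigcap_{i=0}^{r-1}\mathbb{S}_i=\bar{\mathbb{S}}$, establishing its forward invariance.

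I expect the main obstacle to be the first step: rigorously verifying that the matched disturbance $\boldsymbol{d}$ is filtered out of every intermediate derivative and appears only in $\dot{\psi}_{r-1}$, so that the cascade below the top level is genuinely disturbance-free and the RRBF positivity machinery of Theorem~2 transfers verbatim to $\psi_{r-1}$. This hinges entirely on the relative-degree hypothesis $\mathcal{L}_{\boldsymbol{g}}\mathcal{L}_{\boldsymbol{f}}^k h=0$ for $k\leq r-2$, which must be tracked carefully through the chain-rule expansion of the $O(h(\boldsymbol{x}))$ term. Once that structural reduction is secured, Steps two and three are routine invocations of Theorem~2 and Nagumo's theorem, respectively.
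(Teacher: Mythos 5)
Your proposal follows essentially the same route as the paper's own proof: substitute any controller in $\mathcal{K}_{\rm HO\text{-}RRCBF}$ to obtain the disturbed reciprocal-resistance inequality $\dot{\psi}_{r-1} \geq -\alpha_r(\psi_{r-1}) + \beta(1/\psi_{r-1}) + \mathcal{L}_{\boldsymbol{g}}\mathcal{L}_{\boldsymbol{f}}^{r-1}h(\boldsymbol{x})\boldsymbol{d}$, invoke the positivity machinery of Theorems~1 and~2 to render $\mathbb{S}_{r-1}$ forward invariant, and then descend recursively through $\psi_{r-2},\ldots,\psi_0$. Your version is in fact somewhat more explicit than the paper's, which leaves the downward recursion and the disturbance-matching structure implicit, whereas you spell out the zeroing-barrier/Nagumo step $\dot{\psi}_{i-1} = \psi_i - \alpha_i(\psi_{i-1}) \geq -\alpha_i(\psi_{i-1})$ at each level and verify via the relative-degree hypothesis that $\boldsymbol{d}$ enters only at the top stage.
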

\begin{proof}
    First, since there exists a Lipschitz continuous controller $\boldsymbol{u}(\boldsymbol{x})\in \mathcal{K}_{\rm HO-RRCBF}$, then by substituting $\psi_r(\boldsymbol{x},\boldsymbol{u})$ into the dynamics of $\psi_{r-1}(\boldsymbol{x})$ in (\ref{psi_state}), we have
    \begin{equation}
		\dot{\psi}_{r-1}(\boldsymbol{x}) \geq -\alpha_r(\psi_{r-1}(\boldsymbol{x})) + \beta({1/\psi_{r-1}(\boldsymbol{x})}) + \mathcal{L}_{\boldsymbol{g}}\mathcal{L}_{\boldsymbol{f}}^{r-1}h(\boldsymbol{x})\boldsymbol{d}.
    \end{equation}
    Following the results of Theorems 1 and 2, it is obtained that $\psi_{r-1}(\boldsymbol{x})$ is positive, i.e., the set $\mathbb{S}_{r-1}$ is forward invariant. 
		
    Then, the forward invariance of $\mathbb{S}_{r-2}$ can also be guaranteed for that of $\mathbb{S}_{r-1}$ and $\psi_{r-2}(\boldsymbol{x}(t_0))\geq 0$. Therefore, following a recursive analysis from the forward invariance of $\mathbb{S}_{r-1}$ to that of $\mathbb{S}_0$, the forward invariant property of set $\bar{\mathbb{S}}$ can be achieved such that the strict safety constraint is guaranteed. This completes the proof.
\end{proof}
\begin{figure*}[!t]
	\centering
	\subfloat[]{
		\includegraphics[width=0.31\textwidth]{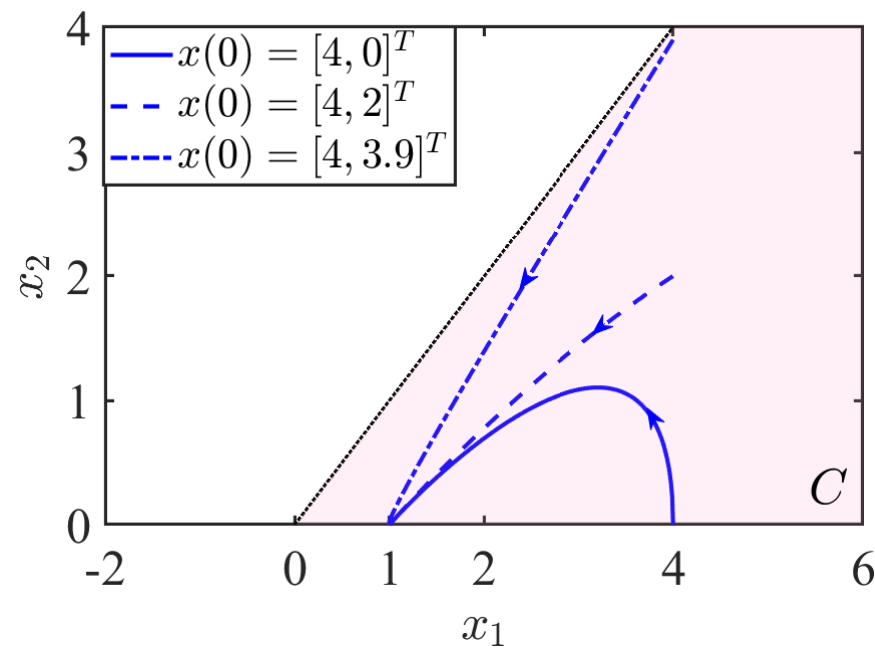}}
	\subfloat[]{
		\includegraphics[width=0.31\textwidth]{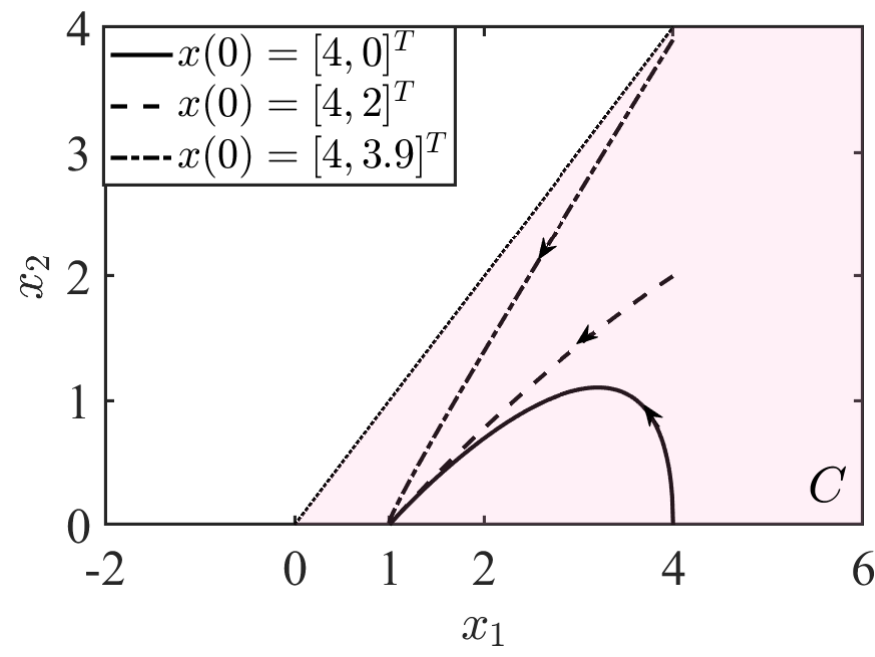}}
        \subfloat[]{
		\includegraphics[width=0.31\textwidth]{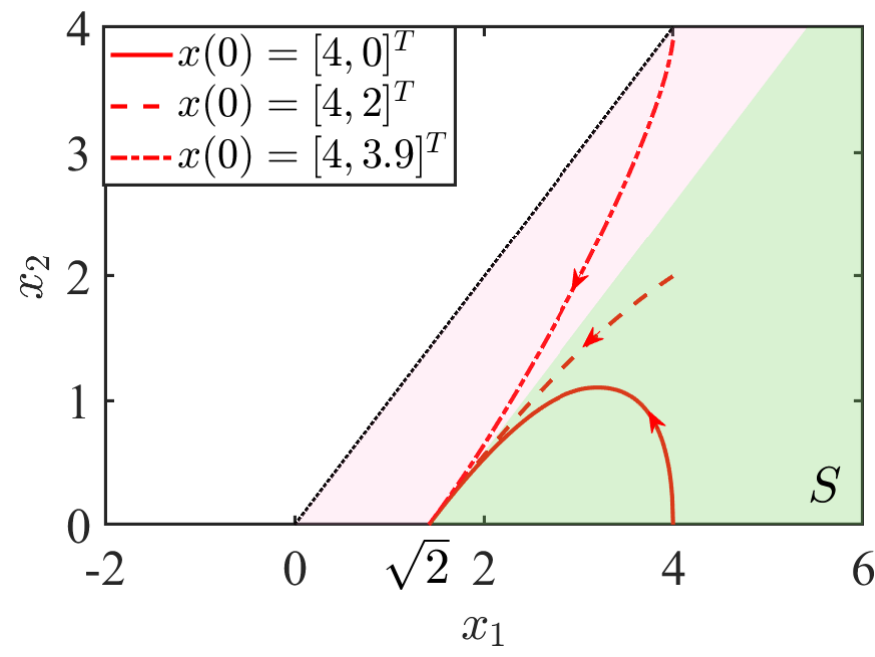}}
	\caption{State trajectories of system \eqref{exp} in the phase plane under different CBF designs with $w=0$: (a) ZCBF with different initial conditions, (b) RCBF with different initial conditions, (c) RRCBF with different initial conditions. }
	\label{results1-1}
\end{figure*} 
\section{RRCBF with disturbance estimate}
The key advantage of the proposed RRCBF lies in the inclusion of the terms $\beta/h(\boldsymbol{x})$ and $\beta/\psi_{r-1}(\boldsymbol{x})$, which enhance robustness against $\boldsymbol{d}$. However, this may introduce conservatism, especially under complex, time-varying disturbances. To address this, based on disturbance estimation and compensation techniques \cite{DasarXiv,sun2024safety,chen2023robust,takano2020robust,WangarXiv,wang2024tac}, a disturbance observer-based RRCBF (DO-RRCBF) is proposed that explicitly incorporates the disturbance estimate $\hat{\boldsymbol{d}}$.
\subsection{DO-RRCBF design}
We assume the availability of well-designed disturbance observers capable of accurately estimating the disturbance $\hat{\boldsymbol{d}}$.
\begin{definition}
	Considering the system (\ref{sys}), a continuously differentiable function $h: \mathbb{R}^n\rightarrow\mathbb{R}$ is a DO-RRCBF, if there exists extended class $\mathcal{K}$ functions $\alpha$ and $\beta$ subject to
	\begin{equation}
		\begin{aligned}
			\sup_{\boldsymbol{u}\in \mathbb{R}^m} \bigg[ \mathcal{L}_{\boldsymbol{f}}h(\boldsymbol{x}) + \mathcal{L}_{\boldsymbol{g}}h(\boldsymbol{x})(\boldsymbol{u} + \hat{\boldsymbol{d}}) + \alpha (h(\boldsymbol{x})) - {\beta}({1/h(\boldsymbol{x})}) \bigg] \geq 0
		\end{aligned}
		\label{do-RRcbf}
	\end{equation}
    for all $\boldsymbol{x}\in {\rm Int}\mathbb{C}$.
\end{definition}
\begin{corollary}
	Consider the system (\ref{sys}) with bounded external disturbances $\boldsymbol{d}$. Giving a valid DO-RRCBF $h: \mathbb{R}^n\rightarrow\mathbb{R}$ , if the initial conditions of $h(\boldsymbol{x}(0))\in{\rm Int \mathbb{C}}$, then any Lipschitz continuous controllers belong to the set $\mathcal{K}_{\rm DO-RRCBF}:= \{\boldsymbol{u}\in\mathbb{R}^m: \mathcal{L}_{\boldsymbol{f}}h(\boldsymbol{x}) + \mathcal{L}_{\boldsymbol{g}}h(\boldsymbol{x})(\boldsymbol{u} + \hat{\boldsymbol{d}}) + \alpha (h(\boldsymbol{x})) - {\beta}({1/h(\boldsymbol{x})}) \geq 0\}$ will render the set ${\rm Int \mathbb{C}}$ forward invariant.
\end{corollary}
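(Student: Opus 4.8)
The plan is to follow the template already established in Corollary 1, exploiting the single structural feature that distinguishes the DO-RRCBF: the feedforward term $\mathcal{L}_{\boldsymbol{g}}h(\boldsymbol{x})\hat{\boldsymbol{d}}$ inside the constraint cancels most of the true disturbance, leaving only the estimation residual as the effective perturbation. First I would take any Lipschitz continuous $\boldsymbol{u}(\boldsymbol{x})\in\mathcal{K}_{\rm DO-RRCBF}$ and differentiate $h$ along the closed-loop trajectories of the \emph{actual} system (\ref{sys}), namely $\dot{h}(\boldsymbol{x}) = \mathcal{L}_{\boldsymbol{f}}h(\boldsymbol{x}) + \mathcal{L}_{\boldsymbol{g}}h(\boldsymbol{x})(\boldsymbol{u}+\boldsymbol{d})$. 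Adding and subtracting $\mathcal{L}_{\boldsymbol{g}}h(\boldsymbol{x})\hat{\boldsymbol{d}}$ and substituting the defining inequality of $\mathcal{K}_{\rm DO-RRCBF}$ gives
\[
\dot{h}(\boldsymbol{x}) \geq -\alpha(h(\boldsymbol{x})) + \beta(1/h(\boldsymbol{x})) + \mathcal{L}_{\boldsymbol{g}}h(\boldsymbol{x})(\boldsymbol{d}-\hat{\boldsymbol{d}}).
\]
Thus the disturbance $\boldsymbol{d}$ that appeared in Corollary 1 is here replaced by the estimation error $\tilde{\boldsymbol{d}}:=\boldsymbol{d}-\hat{\boldsymbol{d}}$.

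Next I would bound the residual term and reduce the problem to the scalar inequality already analyzed. Since $h$ is continuously differentiable and $\boldsymbol{g}$ is locally Lipschitz, $\mathcal{L}_{\boldsymbol{g}}h(\boldsymbol{x})$ is continuous and hence bounded on the compact admissible set $\mathbb{X}$; combined with the standing assumption that the observer accurately estimates the disturbance (so that $\tilde{\boldsymbol{d}}$ is bounded), there exists $D'\in\mathbb{R}_+$ with $|\mathcal{L}_{\boldsymbol{g}}h(\boldsymbol{x})\tilde{\boldsymbol{d}}|\leq D'$ for all admissible $\boldsymbol{x}$ and $\boldsymbol{d}$. This yields $\dot{h}(\boldsymbol{x}) \geq -\alpha(h(\boldsymbol{x})) + \beta(1/h(\boldsymbol{x})) - D'$, which is exactly the inequality treated in Theorem 2 with $D$ replaced by $D'$. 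I would then invoke that result verbatim: by Lemma 2 there is a unique $h_b'\in\mathbb{R}_+$ solving $\alpha(h_b')+D'=\beta(1/h_b')$, and $\dot{h}(\boldsymbol{x})\geq 0$ whenever $h(\boldsymbol{x})\in(0,h_b']$; the positivity argument of Lemma 1 (via Nagumo's theorem together with the ISS property of the reduced scalar dynamics) then guarantees $h(\boldsymbol{x}(t))>0$ for all $t\geq t_0$ whenever $h(\boldsymbol{x}(t_0))>0$, i.e., ${\rm Int}\mathbb{C}$ is forward invariant.

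The main obstacle --- indeed the only substantive point separating this corollary from Corollary 1 --- is the justification that the effective perturbation is the estimation error $\tilde{\boldsymbol{d}}$ rather than the full disturbance $\boldsymbol{d}$, and that $\tilde{\boldsymbol{d}}$ admits a finite bound over the operating region. Everything downstream is a transcription of the machinery in Lemmas 1--2 and Theorems 1--2. A careful statement would require the observer assumption to supply a uniform bound on $\Vert\tilde{\boldsymbol{d}}\Vert$; if only asymptotic convergence of the estimate is available, I would instead take $D'$ to be the finite transient bound on $|\mathcal{L}_{\boldsymbol{g}}h(\boldsymbol{x})\tilde{\boldsymbol{d}}|$, which still closes the argument because the buffer radius $h_b'$ adapts to whatever finite bound holds. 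The conceptual payoff, worth emphasizing, is that $D'\ll D$, so the buffer zone $h_b'$ shrinks relative to the disturbance-only RRCBF, recovering nominal performance closer to $\partial\mathbb{C}$ while retaining the forward-invariance guarantee.
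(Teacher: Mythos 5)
Your proof is correct and takes essentially the same route as the paper, which proves Corollary 2 only by reduction: it substitutes the $\mathcal{K}_{\rm DO-RRCBF}$ constraint into (\ref{sys}) so the effective perturbation becomes the estimation error $\boldsymbol{d}-\hat{\boldsymbol{d}}$, and then invokes Theorems 1--2 and Corollary 1, with $\beta(1/h(\boldsymbol{x}))$ dominating that residual. Your write-up in fact supplies details the paper leaves implicit --- the explicit bound $D'$ on $\vert\mathcal{L}_{\boldsymbol{g}}h(\boldsymbol{x})(\boldsymbol{d}-\hat{\boldsymbol{d}})\vert$ and the distinction between a uniform and a transient observer-error bound, which the paper absorbs into its standing assumption of a well-designed observer.
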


The proof of Corollary 2 can be derived by following the results of Theorems 1 and 2 and Corollary 1. The key distinction is the inclusion of the term $\beta(1/h(\boldsymbol{x}))$, which serves to dominate the estimation error of $\hat{\boldsymbol{d}}$. Furthermore, it is worth noting that the definition of DO-RRCBF can be naturally extended to cases where the input relative degree satisfies $r\geq 2$. 

\subsection{Comparison of DO-RRCBF and disturbance observer-based CBF}
In the work of \cite{Das2023,DasarXiv}, the disturbance observer-based CBF (DO-CBF) is proposed by online compensating the disturbance estimate with the help of the following assumption.
\begin{assumption}\label{assmp1}
    The disturbance $\boldsymbol{d}\in\mathbb{D}$ is assumed to be smooth with respect to $t$ and satisfy $\Vert \boldsymbol{d} \Vert \leq \delta_0$ and $\Vert \dot{\boldsymbol{d}} \Vert \leq \delta_1$, where $\delta_0$ and $\delta_1$ are known positive constants. 
\end{assumption}

Under this assumption, the DO-CBF is defined as follows.
\begin{definition}
	Considering the system (\ref{sys}) under Assumption 1, a continuously differentiable function $h: \mathbb{R}^n\rightarrow\mathbb{R}$ is a DO-CBF, if there exists an extended class $\mathcal{K}$ function $\alpha_e(\cdot)$ subject to
	\begin{equation}
		\begin{aligned}
			\sup_{\boldsymbol{u}\in \mathbb{R}^m}\bigg[ \mathcal{L}_{\boldsymbol{f}}h(\boldsymbol{x}) + \mathcal{L}_{\boldsymbol{g}}h(\boldsymbol{x})(\boldsymbol{u} + \hat{\boldsymbol{d}}) - \Vert \mathcal{L}_{\boldsymbol{g}}h(\boldsymbol{x})\Vert{\epsilon}_d + \alpha_e(h(\boldsymbol{x})) \bigg] \geq 0
		\end{aligned}
		\label{do_cbf}
	\end{equation}
    for all $\boldsymbol{x}\in \mathbb{X}$.
\end{definition}

In (\ref{do_cbf}), ${\epsilon}_d(t)$ denotes the bound on the estimation error, which is used to ensure strict safety enforcement by conservatively dominating the observer's estimation error. Generally, the estimation error bound $\epsilon_d(t)$ is a monotonically decreasing function converging to a small positive value \cite{DasarXiv,WangarXiv,AnilarXiv}. However, the construction of ${\epsilon}_d(t)$ usually requires the prior information of $\boldsymbol{d}$ like $\delta_0$ and $\delta_1$.

In contrast, the proposed DO-RRCBF replaces the estimation error term $\Vert \mathcal{L}_{\boldsymbol{g}}h(\boldsymbol{x})\Vert{\epsilon}_d$ with the reciprocal resistance term $\beta(1/h(\boldsymbol{x}))$, thereby reducing reliance on Assumption 1 and simplifying practical implementation. This benefit will be further demonstrated through simulations in the following section.
\section{Simulation results}
In this section, two simulation examples are presented. The first example considers a second-order linear system with disturbances, aiming at illustrating the forward invariant sets of ZBF, RBF and RRBF in the phase plane. The second example is the adaptive cruise control problem as outlined in \cite{Ames2017} to demonstrate the effectiveness of the proposed approach with high control relative degree. 

\subsection{An illustrative example}
Consider the following linear system employed from \cite{alan2021safe}.
\begin{equation}\label{exp}
	\begin{aligned}
		\dot{x}_1 &= -x_2,\\
        \dot{x}_2 &= u + w(t),
	\end{aligned}
\end{equation}
where $\boldsymbol{x}=[x_1,x_2]^T$ are the states, $u$ is the control input and $w(t)$ is the external disturbance. The desired safety specification is set as $h(\boldsymbol{x}) = x_1 - x_2$. 

To evaluate the set invariance with different CBFs, we consider three CBF designs: the ZCBF, the RCBF with $B(\boldsymbol{x}) = 1/h(\boldsymbol{x})$ and the proposed RRCBF. In all cases, the class $\mathcal{K}$ and extended class $\mathcal{K}$ functions are set as linear (proportional) functions. Define the nominal controller as $u_0 = x_1 - 2x_2 - 1$, yielding a desired equilibrium at $\boldsymbol{x}_e=[1,0]^T$. A quadratic programming-based safety filter technique is used to mitigate both the stability and safety control tasks \cite{Ames2019control}. The control parameters are denoted as $\alpha = 1$, $\beta = 2$ for different CBFs. 

The simulation results are shown in Figs. \ref{results1-1} and \ref{results1-2}. Fig. \ref{results1-1} (a)–(c) demonstrate the system behavior in the absence of disturbances. As illustrated, the proposed RRCBF ensures forward invariance of ${\rm Int}\mathbb{C}$, i.e., the set $\mathbb{S}$ remains invariant for initial conditions 
$\boldsymbol{x}(t_0)\in\mathbb{S}$, and the state converges asymptotically toward the boundary $\partial\mathbb{S}$ when $\boldsymbol{x}(t_0)\in{\rm Int}\mathbb{C}\setminus\mathbb{S}$. These results are in full agreement with the theoretical guarantees established in Theorem 1.
\begin{figure}[!t]
	\centering
		\includegraphics[width=0.45\textwidth]{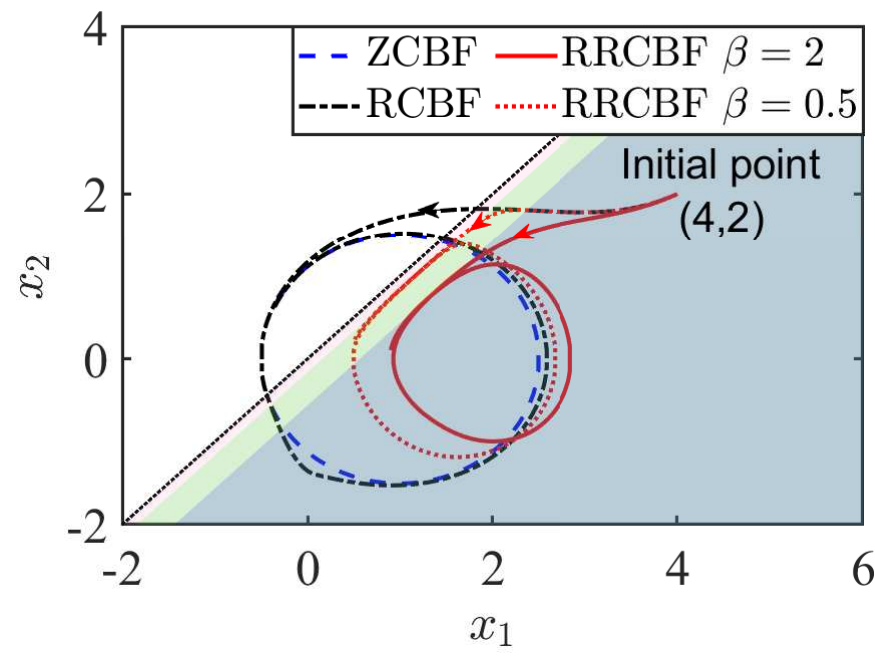}
	\caption{State trajectories of system \eqref{exp} in the phase plane under different CBF designs subject to bounded disturbance. }
	\label{results1-2}
\end{figure} 

Fig. \ref{results1-2} shows the responses under $w(t)=3\sin(t)$. The ZCBF and RCBF cannot maintain set invariance, allowing trajectories to exit the safe set. In contrast, the RRBF preserves safety by preventing boundary violations. Notably, the conservatism of the RRCBF reflected in the size of the robust safe set $\mathbb{S}$ can be tuned by adjusting the parameter $\beta$, i.e., reducing $\beta$ enlarges $\mathbb{S}$.
\subsection{Adaptive cruise control}
The dynamics of the ACC system are described as follows
\begin{equation}
	\begin{aligned}
		\dot{v}_l &= a_l,\\
		\dot{v}_e &= -\frac{1}{m}(f_0 + f_1v_e + f_2v_e^2) + \frac{1}{m}u + w,\\
		\dot{D} &= v_l - v_e,
	\end{aligned}
\end{equation} 
where $v_l$ and $v_e$ are velocities of the leading and ego vehicles; $D$ is the following distance between the ego and leader vehicles; $a_l$ is the acceleration or deceleration of the leader, and $u$ is the wheel force acting as the control input of the ego vehicle; $m$ is the mass of ego vehicle; $f_0,f_1$ and $f_2$ are constant parameters to describe the aerodynamic drag; $w$ is the lumped disturbance. Two control objectives are considered.
\begin{itemize}
	\item \textit{Speed control}. The ego vehicle needs to cruise at the desired speed $v_d$. In order to achieve this, by defining the speed error as $e_v = v_e - v_d$, the following nominal controller is considered.
	\begin{equation}
		u_0 = -m\bigg[ke_v - \frac{1}{m}(f_0 + f_1v_e + f_2v_e^2) + \hat{w}\bigg],
	\end{equation}
	where $k\in\mathbb{R}_+$ is the control parameter to be designed and $\hat{w}$ is the estimate with using disturbance observer proposed in \cite{chen2000nonlinear}.
	\item \textit{Space control}. The safe space maintenance between the ego vehicle and the leading vehicle is a high-priority task, which is solved using the CBF techniques. The candidate CBFs can be selected as $b(D) = D-D_0$, where $D_0$ is the desired distance threshold. 
\end{itemize}
 
In this simulation, conventional CBF, robust CBF (RoCBF) \cite{jankovic2018robust}, DO-CBF \cite{Das2023} without subtracting the estimation error bound, RRCBF and DO-RRCBF are conducted for comparison. The physical parameters of ego and leading vehicles are adopted from \cite{Ames2017}. The desired cruising velocity is $v_d=20 \rm{m/s}$, the following distance threshold $D_0 = 80\rm{m}$, and the initial states are $v_l = 15{\rm{m/s}},v_e = 15 \rm{m/s}$. The acceleration of the leader vehicle is $a_l=0 {\rm m/s^2}$, and external disturbance is set as $w= \sin(t) - 0.5\sin(2t)$. The wheel force limitation is set as $u\in[-0.3mg,0.3mg]$. The control parameters are denoted as $k = 5$ for the speeding tracking controller, $\alpha_1 = \alpha_2 = 1$, $\beta = 0.01$ for CBFs and $L=10$ for the disturbance observer employed from \cite{chen2000nonlinear}.
\begin{figure}[!t]
	\centering
	\subfloat[]{
		\includegraphics[width=0.44\textwidth]{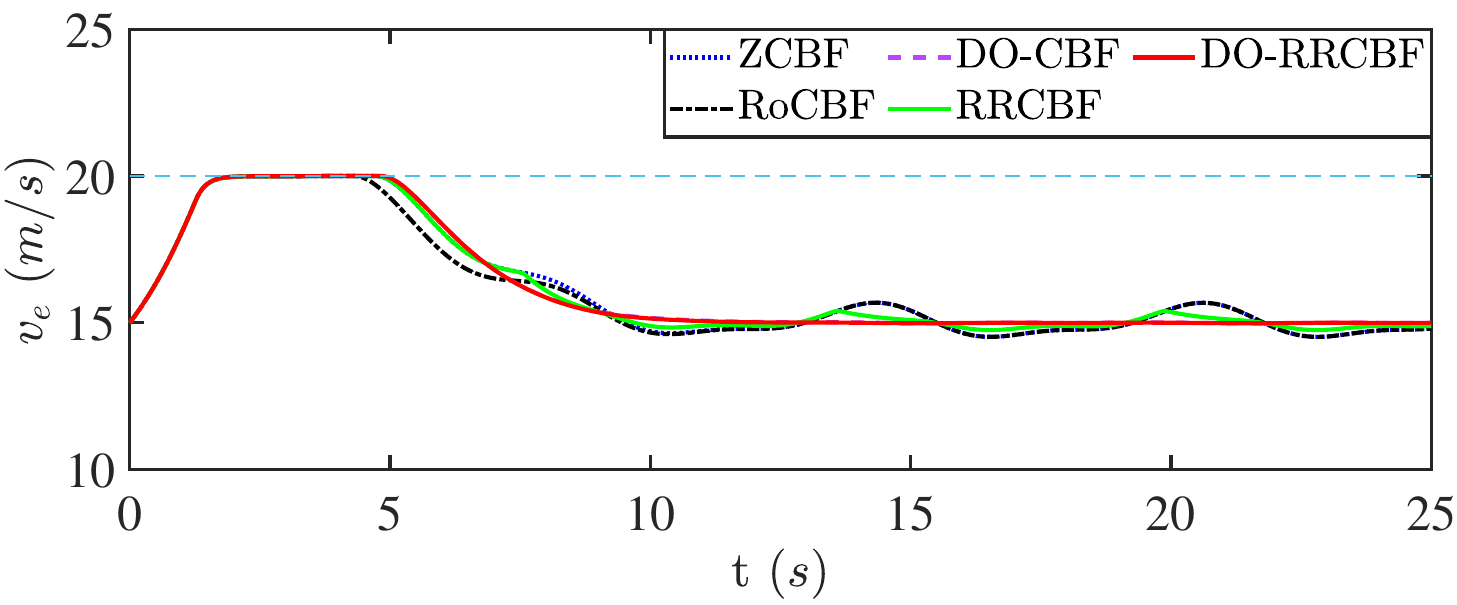}}\\
	\subfloat[]{
		\includegraphics[width=0.44\textwidth]{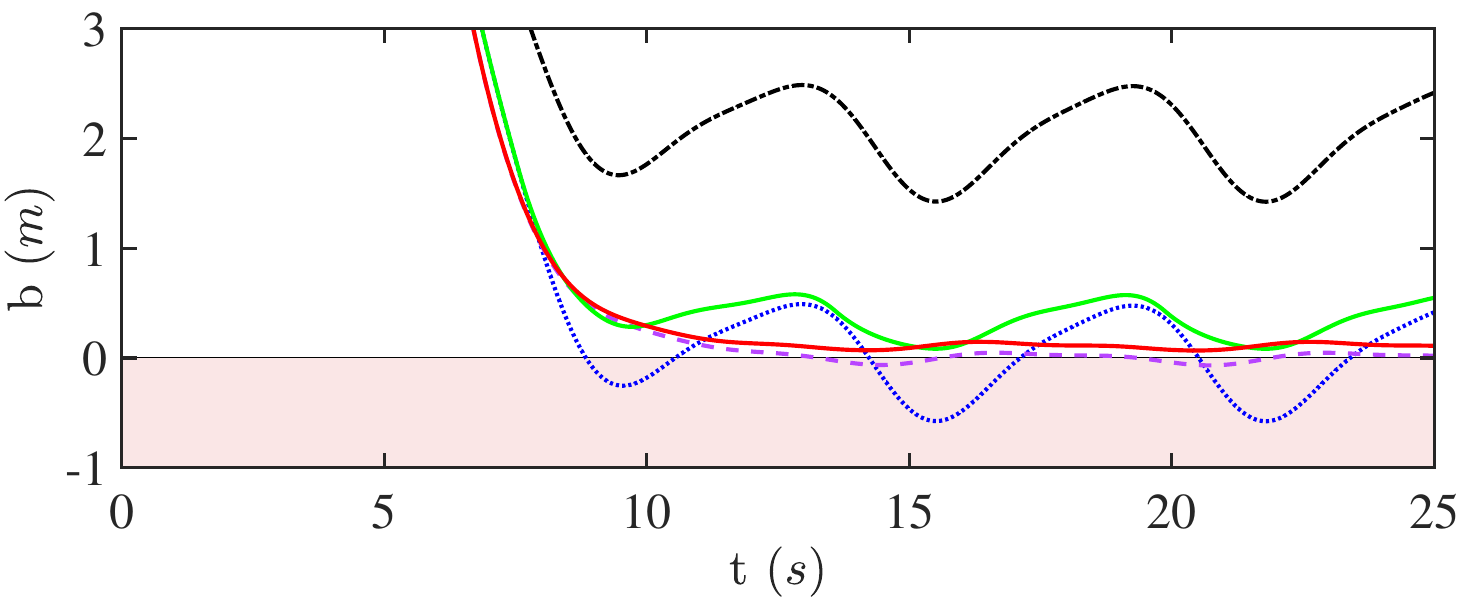}}\\
        \subfloat[]{
		\includegraphics[width=0.44\textwidth]{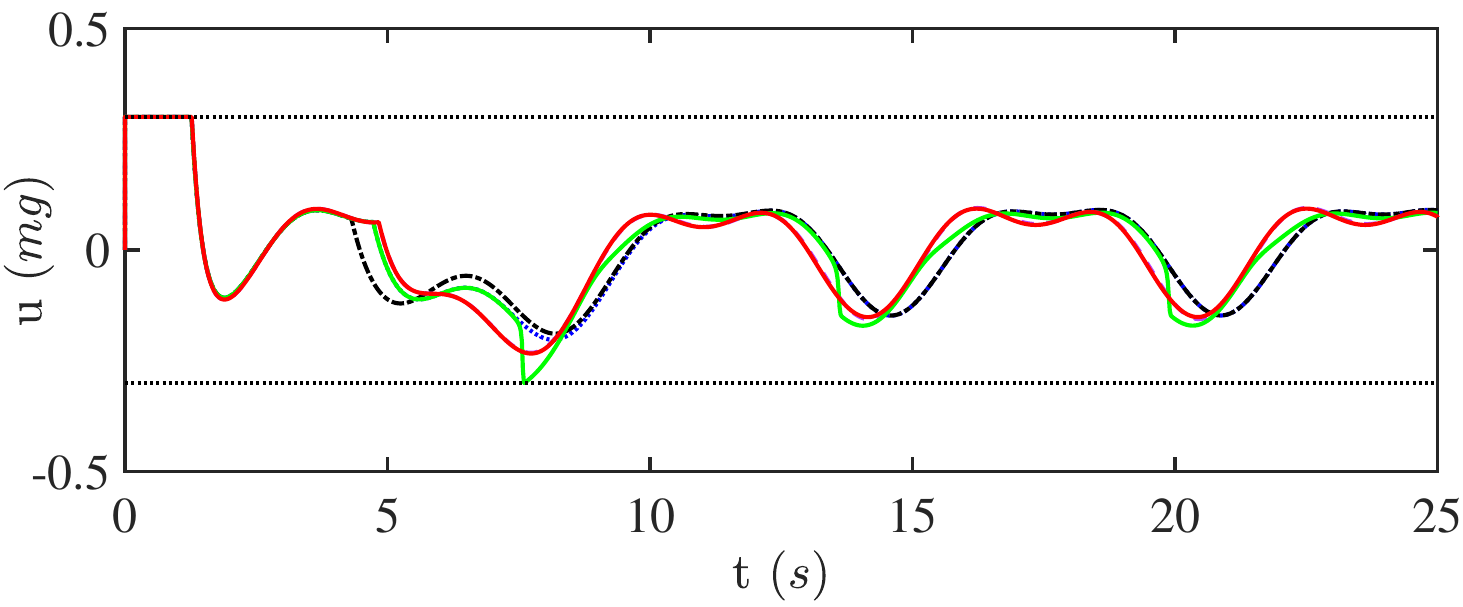}}
	\caption{Simulation results of ACC system with different CBFs: (a) Curves of ego vehicle's velocities $v_e$, (b) Curves of safety functions $b(D)$, (c) Curves of control inputs $u$.}
	\label{results2}
\end{figure} 

The ego vehicle's tracking performance under different CBF controllers is illustrated in Fig. \ref{results2}. The results show that external disturbances negatively affect safety performance; however, some tested controllers, RoCBF, RRCBF, and DO-RRCBF, successfully enforce strict safety constraints, demonstrating the strong disturbance robustness of the proposed approach. Among them, RoCBF and RRCBF maintain safety but tend to operate within slightly more conservative sets compared to DO-RRCBF. When comparing DO-CBF and DO-RRCBF, both attempt to recover nominal control performance, but DO-CBF fails to guarantee strict safety due to the absence of estimation error quantification. In contrast, DO-RRCBF not only maintains strict safety but also reduces conservatism by actively compensating the disturbance estimates, highlighting its adaptability in the presence of disturbances.
\section{Conclusions}
This note has proposed a new reciprocal resistance-based control barrier function framework to enhance safety-critical control for disturbed nonlinear systems without requiring prior knowledge of disturbance bounds or derivatives. By introducing the concept of reciprocal resistance-based barrier functions and extending it to RRCBFs, the proposed approach has ensured forward invariance and robustness against bounded disturbances. Unlike conventional robust CBF methods, the proposed RRCBF framework has eliminated the need for prior disturbance information, making it more adaptable to real-world uncertainties. Theoretical analysis and illustrative simulation tests under disturbances have validated the effectiveness of the proposed approach.

\appendix
Before introducing Lemma 3, we first define a practical version of the RRBF that incorporates a regularization term $\sigma\in\mathbb{R}_+$, replacing the potentially singular term $1/h(\boldsymbol{x})$ with $1/[h(\boldsymbol{x})+\sigma]$. It should be noted that this formulation requires prior knowledge of the disturbance bound, i.e., an upper bound on $\big \vert \frac{\partial h(\boldsymbol{x})}{\partial \boldsymbol{x}}\boldsymbol{d} \big \vert$. Note that $D\in\mathbb{R}_+$ is this upper bound. 
\begin{definition}
	Considering the system (\ref{sys_d}), a continuously differentiable function $h: \mathbb{R}^n\rightarrow\mathbb{R}$ is a practical reciprocal resistance-based barrier function if there exist extended class functions $\alpha$, $\beta$ and $\sigma\in(0,1/\beta^{-1}(D))$ subject to
	\begin{equation}
		\begin{aligned}
			\mathcal{L}_{\boldsymbol{f}}h(\boldsymbol{x}) + \alpha(h(\boldsymbol{x})) - \beta(1/[h(\boldsymbol{x})+\sigma]) \geq 0,\; \forall \boldsymbol{x}\in{\rm Int}\mathbb{C}.
		\end{aligned}
		\label{p-rrbf}
	\end{equation}
    where $\beta^{-1}(\cdot)$ is the inverse of $\beta(\cdot)$.
\end{definition}

With this definition, we state the following lemma.
\begin{lemma}
	Consider the system (\ref{sys_d}). With a valid practical RRBF $h: \mathbb{R}^n\rightarrow\mathbb{R}$, the set $\mathrm{Int}\mathbb{C}$ is forward invariant. 
\end{lemma}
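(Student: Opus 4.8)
The plan is to mirror the proof of Theorem 2, the crucial difference being that the regularization $1/[h(\boldsymbol{x})+\sigma]$ removes the singularity at the boundary $\partial\mathbb{C}=\{h(\boldsymbol{x})=0\}$, while the constraint $\sigma\in(0,1/\beta^{-1}(D))$ is exactly what restores a strictly inward-pointing vector field there. First I would substitute the practical RRBF condition (\ref{p-rrbf}) into the disturbed dynamics (\ref{sys_d}) and bound the disturbance contribution by $D$ (the upper bound of $\big|\frac{\partial h(\boldsymbol{x})}{\partial\boldsymbol{x}}\boldsymbol{d}\big|$ over $\mathbb{X}\times\mathbb{D}$, as in Theorem 2), obtaining
\[
\dot{h}(\boldsymbol{x}) \geq -\alpha(h(\boldsymbol{x})) + \beta\!\left(\frac{1}{h(\boldsymbol{x})+\sigma}\right) - D.
\]

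Next I would analyze the sign of $\dot{h}$ near the boundary by defining $g(h)=\beta(1/[h+\sigma])-\alpha(h)-D$. Since $\alpha$ is an extended class $\mathcal{K}$ function, $\alpha(0)=0$, so $g(0)=\beta(1/\sigma)-D$. The hypothesis $\sigma<1/\beta^{-1}(D)$ gives $1/\sigma>\beta^{-1}(D)$, and as $\beta$ is strictly increasing this yields $\beta(1/\sigma)>D$, hence $g(0)>0$. By continuity of $g$ there exists $h_b>0$ such that $g(h)\geq 0$, and therefore $\dot{h}(\boldsymbol{x})\geq 0$, for all $\boldsymbol{x}$ with $h(\boldsymbol{x})\in[0,h_b]$.

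Finally I would invoke the positivity argument of Lemma 1 together with Nagumo's theorem exactly as in Theorem 2: because the vector field points inward throughout the buffer layer $\{0\leq h(\boldsymbol{x})\leq h_b\}$, and strictly so at $h(\boldsymbol{x})=0$, no trajectory initialized in $\mathrm{Int}\mathbb{C}$ can reach $h(\boldsymbol{x})=0$ in finite time. This establishes forward invariance of $\mathrm{Int}\mathbb{C}$ under the bounded disturbance $\boldsymbol{d}(t,\boldsymbol{x})$.

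The step I expect to be the main obstacle is the boundary behavior. In Theorem 2 the singular term $\beta(1/h(\boldsymbol{x}))\to\infty$ as $h(\boldsymbol{x})\to 0^+$ trivially dominates $D$; after regularization this domination is no longer automatic and must be secured quantitatively through the design constraint on $\sigma$. Pinning down why the admissible range $(0,1/\beta^{-1}(D))$ is precisely the correct one — i.e., making $g(0)>0$ rigorous — is the crux, whereas the remaining forward-invariance reasoning is a routine repetition of the earlier analysis.
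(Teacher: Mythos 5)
Your proposal is correct and follows essentially the same route as the paper's own proof: substitute \eqref{p-rrbf} into \eqref{sys_d} and bound the disturbance term by $D$, use $\sigma<1/\beta^{-1}(D)$ with the monotonicity of $\beta$ to conclude $\beta(1/\sigma)-D>0$ (the paper's $\tilde{\sigma}>0$), deduce a buffer region $(0,h_b]$ on which $\dot{h}(\boldsymbol{x})\geq 0$, and invoke the positivity argument of Lemma 1. The only cosmetic difference is that the paper runs the Lemma~2-style intermediate-value/monotonicity argument on $f_p(s)=\alpha(s)-\beta(1/(s+\sigma))+D$ to also obtain \emph{uniqueness} of the threshold $h_p$, whereas you settle for existence by continuity, which suffices for forward invariance.
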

\begin{proof}
    Using (\ref{p-rrbf}) and the system dynamics (\ref{sys_d}), one has 
	\begin{equation}
		\begin{aligned}
			\dot{h}(\boldsymbol{x}) &\geq -\alpha (h(\boldsymbol{x})) + \beta({1/[h(\boldsymbol{x})+\sigma]}) + \frac{\partial h(\boldsymbol{x})}{\partial \boldsymbol{x}}\boldsymbol{d}\\
            &\geq -\alpha (h(\boldsymbol{x})) + \beta({1/[h(\boldsymbol{x})+\sigma]}) - D.
		\end{aligned}
	\end{equation}
    Since $\sigma\in (0,1/\beta^{-1}(D))$ and $\big\vert \frac{\partial h(\boldsymbol{x})}{\partial \boldsymbol{x}}\boldsymbol{d} \big\vert \leq D$, it is obtained that 
    \begin{equation}
        \lim_{h(\boldsymbol{x})\rightarrow 0} \beta\bigg(\frac{1}{h(\boldsymbol{x})+\sigma}\bigg) - D = \tilde{\sigma} > 0.
    \end{equation}
     
     Define the auxiliary function $f_p(s) = \alpha(s)-\beta[1/(s+\sigma)]+D$ on $s\in\mathbb{R}_+$. We have
    \begin{equation}
        \begin{aligned}
            \lim_{s\rightarrow 0} f_p(s) = -\tilde{\sigma} < 0,\;        \lim_{s\rightarrow \infty} f_p(s) = +\infty.
        \end{aligned}
    \end{equation}
    Following a similar argument to that in Lemma 2, there exists a unique $h_p\in\mathbb{R}_+$ such that $\alpha(h_p)+D = \beta({1/[h(h_p)+\sigma]})$. Then, it is obtained that $\dot{h}(\boldsymbol{x}) \geq 0$ with $h(\boldsymbol{x})\in(0,h_p]$. Therefore, by Lemma 1, the set $\mathrm{Int}\mathbb{C}$ is forward invariant with a valid practical RRBF. This completes the proof.		  
\end{proof}

\bibliographystyle{IEEEtran}        
\bibliography{paper_bib} 

\end{document}